\newtheorem*{main-theorem}{Main Theorem}
\newtheorem{theorem}{Theorem}
\newtheorem{lemma}[theorem]{Lemma}
\newtheorem{proposition}[theorem]{Proposition}
\newtheorem*{corollary}{Corollary}
\theoremstyle{plain}
\newtheorem*{conjecture}{Conjecture}
\newtheorem*{problem}{Problem}
\theoremstyle{definition}
\newcommand{\verifier}{\mathsf{V}}
\newcommand{\refuter}{\mathsf{R}}
\newcommand{\restrictedto}{\upharpoonright}
\newcommand{\powerset}[1]{\mathcal{P}({#1})}
\newcommand{\tuple}[1]{{\langle #1 \rangle}}
\newcommand{\bd}{\mathrm{bd}}
\newcommand{\st}{\mathrm{pos}}
\newcommand{\prezero}{\mathrm{pre}_0}
\newcommand{\preone}{\mathrm{pre}_1}
\newcommand{\pretwo}{\mathrm{pre}_2}
\newcommand{\midzero}{\mathrm{mid}_0}
\newcommand{\nextzero}{\mathrm{nxt}_0}
\newcommand{\nextone}{\mathrm{nxt}_1}
\newcommand{\nexttwo}{\mathrm{nxt}_2}
\title{The $\mu$-calculus' Alternation Hierarchy is Strict\\over Non-Trivial Fusion Logics}
\author{Leonardo Pacheco\thanks{I would like to thank Thibaut Kouptchinky for comments and proof-reading. I would also like to thank the reviewers for their comments. This research was partially funded by the FWF grant TAI-797.}
\institute{TU Wien\\ Vienna, Austria}
\email{leonardo.pacheco@tuwien.ac.at}}
\begin{document}
\maketitle

\begin{abstract}
The modal $\mu$-calculus is obtained by adding least and greatest fixed-point operators to modal logic.
Its alternation hierarchy classifies the $\mu$-formulas by their alternation depth: a measure of the codependence of their least and greatest fixed-point operators.
The $\mu$-calculus' alternation hierarchy is strict over the class of all Kripke frames: for all $n$, there is a $\mu$-formula with alternation depth $n+1$ which is not equivalent to any formula with alternation depth $n$.
This does not always happen if we restrict the semantics.
For example, every $\mu$-formula is equivalent to a formula without fixed-point operators over $\mathsf{S5}$ frames.
We show that the multimodal $\mu$-calculus' alternation hierarchy is strict over non-trivial fusions of modal logics.
We also comment on two examples of multimodal logics where the $\mu$-calculus collapses to modal logic.
\end{abstract}

\section{Introduction}
The modal $\mu$-calculus is obtained by adding least and greatest fixed-point operators to modal logic.
One measure of complexity for $\mu$-formulas is their alternation depth, which measures the codependence of least and greatest fixed-point operators.
Bradfield \cite{bradfield1998simplifying} showed that the $\mu$-calculus' alternation hierarchy is strict: for all $n\in\mathbb{N}$, there is a formula with alternation depth $n+1$ which is not equivalent over unimodal frames to any formula with alternation depth $n$.
On the other hand, Alberucci and Facchini \cite{alberucci2009modal} proved that, over $\mathsf{S5}$ frames, every $\mu$-formula is equivalent to a formula without fixed-point operators.
See Chapter 2 of \cite{pacheco2023exploring} for a survey on the $\mu$-calculus' alternation hierarchy over various classes of frames.

Let $\mathsf{L}_0$ and $\mathsf{L}_1$ be modal logics with disjoint signatures.
The fusion $\mathsf{L}_0\otimes \mathsf{L}_1$ is the smallest modal logic containing both $\mathsf{L}_0$ and $\mathsf{L}_1$.
If $\mathsf{L}_0$ and $\mathsf{L}_1$ are respectively characterized by the Kripke frames in $\mathsf{F}_0$ and $\mathsf{F}_1$, then the fusion $\mathsf{L}_0 \otimes \mathsf{L}_1$ is characterized by frames which are in $\mathsf{F}_i$ when restricted to the signature of $\mathsf{L}_i$, for $i= 0,1$.
Fusion logics are commonly used for multi-agent epistemic logics and on the specification of computer systems.
We show that, over fusions of non-trivial classes of frames, the $\mu$-calculus' alternation hierarchy is strict.
Our proof is based on work of Bradfield \cite{bradfield1998simplifying} and Alberucci \cite{alberucci2002strictness}.

Let $\mathsf{F}$ be a class of unimodal Kripke frames.
We say $\circ\leftarrow\circ\to\circ$ is a subframe of $\mathsf{F}$ iff there is some frame $F=\tuple{W,R}\in\mathsf{F}$ with pairwise different $w_0,w_1,w_2\in W$ such that $w_0 R w_1$ and $w_0 R w_2$.
We analogously define $\circ\to\circ\to\circ$ is a subframe of $\mathsf{F}$ and $\circ\to\circ$ is a subframe of $\mathsf{F}$.
We will define multimodal versions $W_n$ of the winning region formulas $W_n'$ to prove:
\begin{main-theorem}
    Let $\mathsf{F}_0$, $\mathsf{F}_1$, and $\mathsf{F}_2$ be classes of unimodal Kripke frames closed under isomorphic copies and disjoint unions.
    If
    \begin{enumerate}
        \item $\circ\leftarrow\circ\to\circ$ is a subframe of $\mathsf{F}_0$ and $\circ\to\circ$ a subframe of $\mathsf{F}_1$; or
        \item $\circ\to\circ\to\circ$ is a subframe of $\mathsf{F}_0$ and $\circ\to\circ$ a subframe of $\mathsf{F}_1$;
    \end{enumerate}
    then the $\mu$-calculus' alternation hierarchy is strict over $\mathsf{F}_0\otimes\mathsf{F}_1$.
    If
    \begin{enumerate}
        \item[3.] $\circ\to\circ$ is a subframe of $\mathsf{F}_0$, $\mathsf{F}_1$, and $\mathsf{F}_2$;
    \end{enumerate}
    then the $\mu$-calculus' alternation hierarchy is strict over $\mathsf{F}_0\otimes\mathsf{F}_1\otimes\mathsf{F}_2$.
\end{main-theorem}
\begin{corollary}
    Let $\{\mathsf{L}_0,\mathsf{L}_1\} \subseteq \{\mathsf{K}, \mathsf{K4}, \mathsf{S4}, \mathsf{KD45}, \mathsf{S5}, \mathsf{GL}\}$, then the $\mu$-calculus' alternation hierarchy is strict over $\mathsf{L}_0\otimes\mathsf{L}_1$.
\end{corollary}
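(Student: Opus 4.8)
The plan is to derive the corollary from case~(1) of the Main Theorem. The first observation is that whenever $\circ\leftarrow\circ\to\circ$ is a subframe of a class $\mathsf{F}$, then so is $\circ\to\circ$ (witnessed by the same frame and the two points $w_0,w_1$). Hence condition~(1) of the Main Theorem is satisfied for classes $\mathsf{F}_0,\mathsf{F}_1$ as soon as $\circ\leftarrow\circ\to\circ$ is a subframe of both of them. So it suffices to show: for each $\mathsf{L}\in\{\mathsf{K},\mathsf{K4},\mathsf{S4},\mathsf{KD45},\mathsf{S5},\mathsf{GL}\}$ there is a class $\mathsf{F}_{\mathsf{L}}$ of unimodal Kripke frames that characterizes $\mathsf{L}$, is closed under isomorphic copies and disjoint unions, and of which $\circ\leftarrow\circ\to\circ$ is a subframe. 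Given this, for any $\mathsf{L}_0,\mathsf{L}_1$ from the list we set $\mathsf{F}_0:=\mathsf{F}_{\mathsf{L}_0}$ and $\mathsf{F}_1:=\mathsf{F}_{\mathsf{L}_1}$; the Main Theorem then yields strictness over $\mathsf{F}_0\otimes\mathsf{F}_1$, which by the remark in the introduction is a class characterizing $\mathsf{L}_0\otimes\mathsf{L}_1$. (When $\mathsf{L}_0=\mathsf{L}_1$ one simply uses two copies of the same class, one per modality.)

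Next I would fix the frame classes supplied by the standard completeness theorems: $\mathsf{F}_{\mathsf{K}}$ all frames; $\mathsf{F}_{\mathsf{K4}}$ the transitive frames; $\mathsf{F}_{\mathsf{S4}}$ the reflexive transitive frames; $\mathsf{F}_{\mathsf{KD45}}$ the serial, transitive, Euclidean frames; $\mathsf{F}_{\mathsf{S5}}$ the frames whose relation is an equivalence relation; and $\mathsf{F}_{\mathsf{GL}}$ the transitive and conversely well-founded frames (those with no infinite chain $x_0\,R\,x_1\,R\,x_2\,R\,\cdots$). I deliberately take the conversely-well-founded presentation of $\mathsf{GL}$ rather than ``finite transitive irreflexive frames'', since the latter is not closed under infinite disjoint unions. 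Each of these six classes is trivially closed under isomorphic copies, and closed under disjoint unions because transitivity, reflexivity, seriality, Euclideanness and symmetry are all preserved in a disjoint union (the $R$-successors of a point stay inside its own component), and because an infinite $R$-chain in a disjoint union would have to lie within a single component.

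Then I would produce the witnessing frames. For $\mathsf{K}$, $\mathsf{K4}$ and $\mathsf{GL}$ the frame $\tuple{W,R}$ with $W=\{w_0,w_1,w_2\}$ and $R=\{(w_0,w_1),(w_0,w_2)\}$ works: it has no composable pair of edges, hence is transitive; it is irreflexive and finite, hence conversely well-founded. For $\mathsf{S4}$ one adds a reflexive loop at each of $w_0,w_1,w_2$. For $\mathsf{S5}$ one takes $R=W\times W$ on $W=\{w_0,w_1,w_2\}$. The only case needing a moment's thought is $\mathsf{KD45}$: there Euclideanness forces $w_1\,R\,w_2$ and $w_2\,R\,w_1$, and then transitivity forces the loops $w_1\,R\,w_1$ and $w_2\,R\,w_2$, so I would take $R=\{(w_0,w_1),(w_0,w_2)\}\cup(\{w_1,w_2\}\times\{w_1,w_2\})$ on $W=\{w_0,w_1,w_2\}$ and check directly that this relation is serial, transitive and Euclidean. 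In every case $w_0,w_1,w_2$ are pairwise distinct with $w_0\,R\,w_1$ and $w_0\,R\,w_2$, so $\circ\leftarrow\circ\to\circ$ is a subframe of $\mathsf{F}_{\mathsf{L}}$, completing the argument.

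I do not expect a serious obstacle: the content is entirely the bookkeeping above, and the only genuinely non-automatic points are (a) exhibiting a $\mathsf{KD45}$-frame that still contains the ``fork'' $\circ\leftarrow\circ\to\circ$ even though Euclideanness collapses $w_1$ and $w_2$ into a two-element cluster, and (b) being careful to name ``transitive and conversely well-founded'' (rather than ``finite transitive irreflexive'') as the frame class for $\mathsf{GL}$, so that the closure-under-disjoint-unions hypothesis of the Main Theorem is met.
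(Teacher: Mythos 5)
The paper states this corollary without proof, treating it as an immediate consequence of Item 1 of the Main Theorem, and your argument is exactly the intended derivation spelled out in full: you verify for each of the six logics a characterizing frame class that is closed under isomorphic copies and disjoint unions and contains the fork $\circ\leftarrow\circ\to\circ$ as a subframe, with correct attention to the two genuinely delicate points (the Euclidean closure forced in the $\mathsf{KD45}$ witness, and choosing transitive conversely well-founded frames for $\mathsf{GL}$ so that closure under infinite disjoint unions holds). Your proposal is correct and matches the paper's (implicit) approach.
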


One proof of the strictness of the $\mu$-calculus makes essential use of parity games \cite{bradfield1998simplifying,alberucci2002strictness}.
In this proof, evaluation games for the $\mu$-calculus are encoded as parity games, parity games are encoded as Kripke models, and formulas defining winning regions for parity games are given as witnesses for the strictness.
For the multimodal case, we need to make changes for both of these.
The encoding of parity games gets more complicated as we cannot just use the graph of the game as the graph of the Kripke model, and need to use copies of frames from both classes along with auxiliary propositional symbols in the encoding.
This also complicates the winning region formulas, which need to take into account these auxiliary propositional symbols.

While the hypotheses of the Main Theorem  looks \emph{ad hoc}, we conjecture that they are optimal.
\begin{conjecture}
    Let $\mathsf{F}_0$ and $\mathsf{F}_1$ be classes of unimodal Kripke frames closed under isomorphic copies and disjoint unions.
    Suppose $\circ\to\circ$ is a subframe of $\mathsf{F}_0$ and $\mathsf{F}_1$.
    Then every $\mu$-formula is equivalent to one with alternation depth $1$ over $\mathsf{F}_0\otimes\mathsf{F}_1$.
\end{conjecture}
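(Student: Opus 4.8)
The plan is to prove the collapse by adapting the template behind the $\mathsf{S5}$ result of Alberucci and Facchini \cite{alberucci2009modal}: show that over $\mathsf{F}_0\otimes\mathsf{F}_1$ the semantic structure is too poor to sustain genuine $\mu/\nu$ alternation, so that every $\mu$-formula is equivalent to a Boolean combination of single-type fixed points. The guiding principle I would invoke is that the alternation hierarchy collapses to a given level over a class of models exactly when the bisimulation quotients of those models admit no simulation of parity games beyond that index; since the $\mu$-calculus is bisimulation-invariant, it suffices to argue on quotients. Thus I would reduce the conjecture to the single structural claim that fusion frames meeting the hypothesis simulate only parity games of index $1$, and then read off the collapse from that bound.

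First I would fix the shape of the frames. A frame of $\mathsf{F}_0\otimes\mathsf{F}_1$ carries two relations $R_0,R_1$ on a common domain whose $R_i$-reduct lies in $\mathsf{F}_i$; the hypothesis supplies $\circ\to\circ$ as a subframe of each $\mathsf{F}_i$, so both modalities are non-trivial but, in the boundary situation that makes this the converse of the Main Theorem, each reduct carries only single edges. In that regime the absence of $\circ\leftarrow\circ\to\circ$ forces each $R_i$ to have out-degree at most $1$, i.e.\ to be a partial function $f_i$, and the absence of $\circ\to\circ\to\circ$ forces $f_i^2(u)\in\{u,f_i(u)\}$ wherever defined. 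Hence each $f_i$ sends every point, within one step, to a fixed point or into a $2$-cycle, and its recurrent part has period at most $2$. The only way to manufacture long behaviour is therefore to interleave $f_0$ and $f_1$, and I would analyse the strongly connected components of the combined transition graph generated by alternating $f_0$- and $f_1$-steps.

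With the recurrence structure pinned down, the collapse itself is routine. On each strongly connected component the least and greatest approximants of any formula coincide after a single full traversal of the component's (bounded-period) cycle, so a $\mu X.\,\nu Y.\,\varphi$ alternation can be rewritten, component by component, as a disjunction of the two one-type unfoldings, none of which nests a $\nu$ under a $\mu$ in a dependent way. Iterating this rewriting over the finitely many component types and pulling the result back along bisimulations yields an equivalent formula of alternation depth $1$ over the whole class, which is what the conjecture asserts.

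The hard part — and the reason this remains a conjecture — is the sharp contrast with part $3$ of the Main Theorem: three shallow modalities already encode Bradfield's winning-region games of unbounded index \cite{bradfield1998simplifying,alberucci2002strictness}, so the argument must exploit something genuinely special about having only \emph{two} interleaved single-edge relations. The delicate claim is precisely that two alternating partial functions of the above form generate only "one-dimensional" recurrence — cycles on which a parity condition reduces to a single extremal priority, giving index $1$ — whereas a third such function supplies the extra direction needed to branch and nest recurrent regions. Quantifying this, and reconciling it with the fact that the bare hypothesis $\circ\to\circ\subseteq\mathsf{F}_i$ does \emph{not} by itself exclude the richer subframes that the Main Theorem converts into strictness witnesses, is the main obstacle: a complete proof must show that the collapse regime coincides with the absence of $\circ\leftarrow\circ\to\circ$ and $\circ\to\circ\to\circ$, after which the two-modality component analysis goes through and the index-$1$ bound delivers the result.
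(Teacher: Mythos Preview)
The statement you address is presented in the paper as a \textbf{conjecture}; the paper offers no proof and explicitly flags it as open. There is therefore no paper proof to compare your proposal against.

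Your proposal is not a proof either, and you acknowledge as much in your final paragraph (``the reason this remains a conjecture''). Two concrete gaps deserve emphasis.

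First, the step you label ``routine'' fails. You claim that on each strongly connected component of the combined transition graph the least and greatest approximants coincide ``after a single full traversal of the component's (bounded-period) cycle.'' But even in the boundary regime where each $R_i$ is a partial function with $f_i^2(u)\in\{u,f_i(u)\}$, the graph generated by interleaving $f_0$ and $f_1$ need not have bounded-period components. Take $W=\mathbb{Z}$, let $f_0$ swap $2k\leftrightarrow 2k{+}1$ and $f_1$ swap $2k{-}1\leftrightarrow 2k$; each $f_i$ is a disjoint union of $2$-cycles, so neither $\circ\leftarrow\circ\to\circ$ nor $\circ\to\circ\to\circ$ occurs in either reduct, yet alternating $f_0,f_1$ walks all of $\mathbb{Z}$ and the combined graph is a single infinite strongly connected component. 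Nothing in your sketch explains why $\mu/\nu$ alternation cannot be sustained there, and the hand-wave about rewriting $\mu X.\nu Y.\varphi$ ``component by component'' has no content without a bound on the components.

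Second, you correctly note that the literal hypothesis ``$\circ\to\circ$ is a subframe of $\mathsf{F}_i$'' does not exclude the richer subframes that trigger the Main Theorem, so that the conjecture as literally written is already in tension with the paper's own theorem. But your proposed resolution---``a complete proof must show that the collapse regime coincides with the absence of $\circ\leftarrow\circ\to\circ$ and $\circ\to\circ\to\circ$''---is a restatement of what is to be proved, not a step toward it. What you have written is a plausible research programme, not an argument; the paper is in the same position, which is why it records the statement as a conjecture.
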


As a counterpoint, we comment on two multimodal logics where the $\mu$-calculus collapses to modal logic.
$\mathsf{GLP}$ is a provability logic which contains countably many modal operators; its fixed-point property was proved by Ignatiev \cite{ignatiev1993strong}.
$\mathsf{IS5}$ is an intuitionistic version of $\mathsf{S5}$ which can be thought of as a fragment of a bimodal logic; the $\mu$-calculus' collapse to modal logic over $\mathsf{IS5}$ was proved by Pacheco \cite{pacheco2023game}.

\paragraph{Outline}
In Section \ref{sec::preliminaries}, we review some basic definitions.
In Sections \ref{sec::evaluation-games-as-parity-games}, \ref{sec::winning-region-formulas}, and \ref{sec::strictness}, we give a detailed proof of Item 1 of the Main Theorem: we first show that evaluation games for the $\mu$-calculus are also parity games, then define the formulas $W_n$ and show how parity games can be encoded as multimodal Kripke models and, at last, show that $W_n$ is not equivalent to any formula with lower alternation depth.
In Section \ref{sec::other-items}, we sketch how to modify the proof to show Items 2 and 3 of the Main Theorem.
In Section~\ref{sec::case-studies}, we describe two examples of multimodal logics where the $\mu$-calculus collapses to modal logic.

\section{Preliminaries}
\label{sec::preliminaries}

\paragraph{The $\mu$-calculus}
Fix a set $\mathrm{Prop}$ of propositional symbols, a set $\mathrm{Var}$ of variable symbols, and a non-empty signature $\Lambda$.
The \emph{$\mu$-formulas} are generated by the following grammar:
\[
    \varphi := P \;|\; \neg P \;|\; X \;|\; \varphi\land\varphi \;|\; \varphi \lor \varphi \;|\; \Box_i \varphi_i \;|\; \Diamond_i \varphi \;|\; \mu X.\varphi \;|\; \nu X.\varphi,
\]
where $P\in\mathrm{Prop}$, $X\in\mathrm{Var}$ is a variable symbol, and $i\in\Lambda$.
We write $\eta X.\varphi$ for $\mu X.\varphi$ or $\nu X.\varphi$.
The \emph{set of subformulas} of a formula $\varphi$ is denoted by $\mathrm{Sub}(\varphi)$.

Given a signature $\Lambda$, a \emph{Kripke frame} is a pair $M=\tuple{W, \{R_i\}_{i\in\Lambda}}$ where: $W$ is the set of \emph{possible worlds}; and each $R_i$ is a binary relation on $W$, the \emph{accessibility relations}.
A \emph{Kripke model} is a triple $M=\tuple{W,\{R_i\}_{i\in\Lambda}, V}$ obtained by extending a Kripke frame with a function $V$ from propositional symbols to subsets of $W$; $V$ is called a \emph{valuation function}.
Given a set $A\subseteq W$, the \emph{augmented model} $M[X:=A]$ is obtained by setting $V(X):= A$.
A \emph{pointed Kripke model} is a pair $(M,w)$ consisting of a Kripke model $M$ and a world $w$ of $M$.

Fix a Kripke model $M = \tuple{W,\{R_i\}_{i\in\Lambda},V}$.
Given a $\mu$-formula $\varphi(X)$ with a distinguished variable $X$, let $\Gamma_{\varphi(X)}:\powerset{W}\to \powerset{W}$ be the operator which maps $A\subseteq W$ to $\|\varphi(X)\|^{M[X:=A]}$.
We define the valuation $\|\varphi\|^M$ on $M$ inductively on the structure of $\mu$-formulas:
\begin{center}
\begin{tabular}{ll}
    \textbullet \; $\|P\|^M := V(P)$; & \\
    \textbullet \; $\|X\|^{M[X:=A]} := A$; & \textbullet \; $\|\neg\varphi\|^M := W\setminus\|\varphi\|^M$; \\
    \textbullet \; $\|\varphi\land \psi\|^M := \|\varphi\|^M \cap \|\psi\|^M$; & \textbullet \; $\|\varphi\lor \psi\|^M := \|\varphi\|^M \cup \|\psi\|^M$; \\
    \textbullet \; $\|\Box_i \varphi\|^M := \{w \in W \;|\; \forall v.w R_i v \to v\in \|\varphi\|^M\}$; & \textbullet \; $\|\Diamond_i \varphi\|^M := \{w \in W \;|\; \exists v.w R_i v \land v\in \|\varphi\|^M\}$; \\
    \textbullet \; $\|\mu X.\varphi\|^M$ is the least fixed-point of $\Gamma_{\varphi(X)}$; & \textbullet \; $\|\nu X.\varphi\|^M$ is the greatest fixed-point of $\Gamma_{\varphi(X)}$.
\end{tabular}
\end{center}
Note that the operator $\Gamma_{\varphi(X)}$ is monotone for all formula $\varphi(X)$: if $A\subseteq B\subseteq W$, then $\Gamma_{\varphi(X)}(A)\subseteq\Gamma_{\varphi(X)}(B)$.
By the Knaster--Tarski Theorem, the least and greatest fixed-points of $\Gamma_{\varphi(X)}$ are well-defined.
We say a formula $\varphi$ is \emph{valid on a Kripke model $M$} iff $\varphi$ holds on all worlds of $M$.
We say a formula $\varphi$ is \emph{valid on a Kripke frame $F$} iff $\varphi$ is valid on all Kripke models obtained by adding valuations to $F$.
When convenient, we write $M,w\models\varphi$ for $w\in\|\varphi\|^M$.
See \cite{bradfield2018mucalculus} for more information on the $\mu$-calculus.

\paragraph{Fusions}
Fix $n\in\mathbb{N}$.
A (normal) modal logic is a set of formulas (without fixed-point operators) closed containing all the propositional tautologies and closed under \emph{modus ponens}, necessitation, and substitution.
Let $\{\mathsf{L}_j\}_{j\leq n}$ be a collection of modal logics with pairwise disjoint signatures.
The fusion $\bigotimes_{j\leq n}\mathsf{L}_j$ is the smallest modal logic containing the logics $\{\mathsf{L}_j\}_{j\leq n}$.
Let $\{\mathsf{F}_j\}_{j\leq n}$ be classes of frames with pairwise disjoint signatures $\{\Lambda_j\}_{j\leq n}$.
Put $\Lambda = \bigcup_{j\leq n}\Lambda_j$.
Define $\bigotimes_{j\leq n}\mathsf{F}_j$ as the class of frames $F = \tuple{W, \{R_i\}_{i\in\Lambda}}$ such that $\tuple{W, \{R_i\}_{i\in\Lambda_j}}$ is a frame of $\mathsf{F}_j$ for all $j\leq n$.

Suppose the modal logic $\mathsf{L}_j$ is characterized by the class of frames $\mathsf{F}_j$, for all $j\leq n$.
Then $\bigotimes_{j\leq n}\mathsf{L}_j$ is characterized by $\bigotimes_{j\leq n}\mathsf{F}_j$.
Furthermore, if all the $\mathsf{L}_j$ have the finite model property, then $\bigotimes_{j\leq n}\mathsf{L}_j$ also has the finite model property.
Similarly, if all the $\mathsf{L}_j$ are decidable, so is $\bigotimes_{j\leq n}\mathsf{L}_j$.
On the other hand, fusions do not preserve the complexity of the logics: almost all interesting fusions are $\mathrm{PSPACE}$-hard.
See \cite{kurucz2007combining,carnielli2020combining} for more on fusions of modal logics and other combinations of modal logics.

\paragraph{Alternation Hierarchy}
The $\mu$-calculus’ alternation hierarchy classifies the $\mu$-formulas according to the co-dependence of its least and greatest fixed-point operators.
We define it as follows:
\begin{itemize}
    \item $\Sigma^\mu_0 (= \Pi^\mu_0)$ is the set of all $\mu$-formulas with no fixed-point operators.
    \item $\Sigma^\mu_{n+1}$ is the closure of $\Sigma^\mu_n \cup \Pi^\mu_n$ under
        propositional operators,
        modal operators,
        $\mu X$,
        and the substitution: if $\varphi(X)\in \Sigma^{\mu}_{n+1}$ and $\psi\in \Sigma^{\mu}_{n+1}$ are such that no free variable of $\psi$ becomes bound in $\varphi(\psi)$, then $\varphi(\psi)\in \Sigma^{\mu}_{n+1}$.
    \item $\Pi^\mu_{n+1}$ is the closure of $\Sigma^\mu_n \cup \Pi^\mu_n$ under
        propositional symbols,
        modal operators,
        $\nu X$,
        and the analogous substitution: if $\varphi(X)\in \Pi^{\mu}_{n+1}$ and $\psi\in \Pi^{\mu}_{n+1}$ are such that no free variable of $\psi$ becomes bound in $\varphi(\psi)$, then $\varphi(\psi)\in \Pi^{\mu}_{n+1}$.
\end{itemize}

Let $\mathsf{F}$ be a class of Kripke frames.
The $\mu$-calculus' alternation hierarchy is strict over $\mathsf{F}$ iff, for all $n$, there is a formula in $\Sigma^\mu_{n+1}\cup\Pi^\mu_{n+1}$ which is not equivalent to any formula in $\Sigma^\mu_{n}\cup\Pi^\mu_{n}$ over $\mathsf{F}$.
The $\mu$-calculus collapses to modal logic over $\mathsf{F}$ iff every $\mu$-formula is equivalent to a formula without fixed-point operators over $\mathsf{F}$.

\paragraph{Game Semantics}
The $\mu$-calculus also has an equivalent game semantics.
Fix a $\mu$-formula $\varphi$, a Kripke model $M = \tuple{W,\{R_i\}_{i\in\Lambda},V}$, and a world $w$.
For notational simplicity, we suppose each variable occurring in $\varphi$ has only one occurrence and is bound by some fixed-point operator.\footnote{This statement is not problematic as we are interested in metamathematical properties of the $\mu$-calculus. More care is needed when one is interested in the complexity of algorithms related to the $\mu$-calculus. See \cite{kupke2021sizematters}.}
The evaluation game $\mathcal{G}(M,w\models \varphi)$ is a game for two players: Verifier and Refuter, denoted by $\mathsf{V}$ and $\mathsf{R}$ respectively.
The positions of the game are of the form $\tuple{\psi, v}$ with $\psi\in \mathrm{Sub}(\varphi)$ and $v\in W$.
The initial position is $\tuple{\varphi, w}$.
Each position $\tuple{\psi,v}$ is owned by a player, who makes the next move.
Table \ref{table::evaluation} summarizes the ownership of $\tuple{\psi,v}$ and admissible moves on it; both are determined by the construction of $\psi$.
On Table \ref{table::evaluation}, $\psi_X$ denotes the unique subformula of $\varphi$ such that $X$ occurs freely in $\psi_X$ and
$\eta X.\psi_X\in \mathrm{Sub}(\varphi)$.

Let $\rho$ be a run of an evaluation game $\mathcal{G}(M,w\models \varphi)$.
If $\rho$ is finite, $\mathsf{V}$ wins $\rho$ iff $\mathsf{R}$ cannot make a move and $\mathsf{R}$ wins $\rho$ iff $\mathsf{V}$ cannot make a move.
If $\rho$ is infinite, let $\eta X.\psi\in \mathrm{Sub}(\varphi)$ be a formula such that: positions of the form $\tuple{\eta X.\psi, v}$ appear infinitely many often in $\rho$; and, for all formula $\theta$ such that positions $\tuple{\theta,v}$ appear infinitely often in $\rho$, $\theta\in\mathrm{Sub}(\eta X.\psi)$.
Then $\mathsf{V}$ wins $\rho$ iff $\eta$ is $\nu$ and $\mathsf{R}$ wins $\rho$ iff $\eta$ is $\mu$.
A strategy is a function indicating how a player should move.
A winning strategy for $\mathsf{V}$ is a strategy $\sigma$ for $\mathsf{V}$ such that $\mathsf{V}$ wins all runs where they follow $\sigma$.
We define winning strategies for $\mathsf{R}$ similarly.

Relational semantics and game semantics are equivalent:
\begin{proposition}
    \label{prop::kripke-and-game-semantics}
    Let $M = \tuple{W,\{R_i\}_{i\in\Lambda},V}$ be a Kripke model, $w\in W$ be a world, and $\varphi$ be a $\mu$-formula.
    Then $M, w\models\varphi$ iff $\mathsf{V}$ has a winning strategy in the evaluation game $\mathcal{G}(M,w\models\varphi)$; and $M,w\not\models\varphi$ iff $\mathsf{R}$ has a winning strategy in the evaluation game $\mathcal{G}(M,w\models\varphi)$.
\end{proposition}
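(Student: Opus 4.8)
The plan is to prove both biconditionals by induction on the structure of $\varphi$, reducing the statement about $\mathsf{R}$ to the one about $\mathsf{V}$ by De~Morgan duality. I would first establish, by induction on $\varphi$ with the induction hypothesis quantified over \emph{all} Kripke models, the single implication: if $M,w\models\varphi$ then $\mathsf{V}$ has a winning strategy in $\mathcal{G}(M,w\models\varphi)$. Quantifying over all models matters, since the fixed-point cases invoke the hypothesis over models with an altered valuation. The propositional and modal cases are read straight off the game rules: at a leaf $\langle P,w\rangle$, resp.\ $\langle\neg P,w\rangle$, the hypothesis $w\in\|P\|^M$, resp.\ $w\notin\|P\|^M$, is precisely the statement that $\mathsf{R}$ cannot move; for $\psi_0\lor\psi_1$ (owned by $\mathsf{V}$) the hypothesis supplies an $i$ with $M,w\models\psi_i$, and $\mathsf{V}$ moves to $\langle\psi_i,w\rangle$ and then copies the strategy furnished by the induction hypothesis; the cases $\land$, $\Diamond_i$, $\Box_i$ are analogous (for $\Box_i\psi$, $\mathsf{R}$ picks an $R_i$-successor $v$, which by hypothesis satisfies $\psi$, and $\mathsf{V}$ continues with the induction-hypothesis strategy for $\mathcal{G}(M,v\models\psi)$).

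For $\nu X.\psi$, put $A=\|\nu X.\psi\|^M$, so $A=\|\psi\|^{M[X:=A]}$. $\mathsf{V}$ plays to maintain the invariant that every position $\langle\nu X.\psi,v\rangle$ visited so far has $v\in A$: from such a position the play proceeds with no choice to $\langle\psi,v\rangle$, and since $v\in\|\psi\|^{M[X:=A]}$ the induction hypothesis, applied to the model $M[X:=A]$, yields a winning strategy in $\mathcal{G}(M[X:=A],v\models\psi)$, which $\mathsf{V}$ imitates; while doing so the play stays among positions $\langle\theta,\cdot\rangle$ with $\theta\in\mathrm{Sub}(\psi)$, and if it reaches a leaf $\langle X,v'\rangle$ then, that strategy being winning, $v'\in V(X)=A$, so the forced move re-entering $\nu X.\psi$ at $v'$ preserves the invariant. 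An infinite play so produced either eventually remains inside a single $\psi$-subgame --- then it is an infinite play of that subgame won by $\mathsf{V}$, and since moves, ownership and the ``infinitely often'' condition ignore the valuation of $X$, it is equally a play of the real game with the same winner --- or it unfolds $X$ infinitely often, in which case $\nu X.\psi$ is the outermost fixed-point subformula unfolded infinitely often, so the winner is decided by a $\nu$-operator and $\mathsf{V}$ wins again. The case $\mu X.\psi$ runs the same way except that $\mathsf{V}$ must force the unfolding of $X$ to terminate: writing $\mu^0=\emptyset$, $\mu^{\alpha+1}=\Gamma_{\psi(X)}(\mu^\alpha)$, $\mu^\lambda=\bigcup_{\alpha<\lambda}\mu^\alpha$, so $\|\mu X.\psi\|^M=\bigcup_\alpha\mu^\alpha$, $\mathsf{V}$ attaches to each visited $\langle\mu X.\psi,v\rangle$ the least $\alpha$ with $v\in\mu^\alpha$ --- necessarily a successor $\beta+1$ --- unfolds to $\langle\psi,v\rangle$, and imitates the induction-hypothesis strategy for $\mathcal{G}(M[X:=\mu^\beta],v\models\psi)$; at a leaf $\langle X,v'\rangle$ one has $v'\in\mu^\beta$, hence the ordinal attached at the next $\langle\mu X.\psi,v'\rangle$ is $<\beta+1$. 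Since the ordinals are well-founded, $X$ is unfolded only finitely often, so every infinite play stabilises inside one $\psi$-subgame and is won by $\mathsf{V}$ as before.

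It remains to deduce the companion implication, that $M,w\not\models\varphi$ gives $\mathsf{R}$ a winning strategy in $\mathcal{G}(M,w\models\varphi)$. This follows by duality: $M,w\not\models\varphi$ means $M,w\models\overline\varphi$, where $\overline\varphi$ is the De~Morgan dual of $\varphi$ (swap $\land$ with $\lor$, $\Box_i$ with $\Diamond_i$, $\mu$ with $\nu$, $P$ with $\neg P$); $\overline\varphi$ is again a formula of the given grammar, $\|\overline\varphi\|^M=W\setminus\|\varphi\|^M$, and $\mathcal{G}(M,w\models\overline\varphi)$ is literally $\mathcal{G}(M,w\models\varphi)$ with the two players' roles interchanged, so applying the implication already proved to $\overline\varphi$ hands $\mathsf{R}$ a winning strategy in $\mathcal{G}(M,w\models\varphi)$. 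The two implications give both biconditionals, since $\mathsf{V}$ and $\mathsf{R}$ cannot both have winning strategies --- a play consistent with both would be won by both. I expect the main obstacle to be the bookkeeping in the fixed-point cases: getting the induction hypothesis to apply to the inner formula over a modified valuation, and analysing infinite plays correctly, which relies on the standard structural facts that the subformulas occurring infinitely often along a play are linearly ordered by the subformula relation with a fixed-point formula as maximum, and that a play stays within $\mathrm{Sub}(\eta X.\psi)$ from the moment it enters an $\eta X.\psi$-subgame until it next unfolds $X$. Alternatively, the converse implications can be bypassed altogether by invoking determinacy of the evaluation game, once it is recognised as a parity game as in Section~\ref{sec::evaluation-games-as-parity-games}.
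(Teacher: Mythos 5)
Your proposal is correct and is the standard adequacy argument for the game semantics of the $\mu$-calculus: structural induction with the hypothesis over all (augmented) models, the Knaster--Tarski fixed point for $\nu$, ordinal approximants to force termination of $\mu$-unfoldings, and De~Morgan duality (or parity-game determinacy) for the Refuter half. The paper does not prove Proposition~\ref{prop::kripke-and-game-semantics} itself but defers to \cite{bradfield2018mucalculus,pacheco2023game}, where essentially this same argument is carried out, so there is nothing to compare beyond noting that your write-up supplies the proof the paper omits.
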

\begin{proof}
    See \cite{bradfield2018mucalculus} or \cite{pacheco2023game}
\end{proof}

\begin{table}[htb]\renewcommand\arraystretch{1.2}
  \begin{center}

  \caption{The rules of evaluation game for modal $\mu$-calculus.}
  \label{table::evaluation}
  \begin{tabular}{c|c||c|c}
    \multicolumn{2}{c}{Verifier} & \multicolumn{2}{c}{Refuter}\\
    \hline
    Position &  Admissible moves & Position &  Admissible moves\\

    $\tuple{\psi_1 \lor \psi_2, w}$ &  $\{\tuple{ \psi_1, w} , \tuple{\psi_2, w} \}$ &
    $\tuple{\psi_1 \land \psi_2 , w}$ &  $\{ \tuple{\psi_1 , w} , \tuple{\psi_2 , w} \}$ \\

    $\tuple{\Diamond_i \psi , w}$ & $\{ \tuple{\psi, v} \mid \tuple{w, v} \in R_i \}$ &
    $\tuple{\Box_i \psi, w}$ & $\{ \tuple{\psi, v} \mid \tuple{w, v} \in R_i \}$ \\

    $\tuple{P, w}$ and $w \not\in V(P)$  &  $\emptyset$ &
    $\tuple{P, w}$ and $w \in V(P)$  &  $\emptyset$ \\

    $\tuple{\neg P, w}$ and $w \in V(P)$  &  $\emptyset$ &
    $\tuple{\neg P, w}$ and $w \not\in V(P)$  &  $\emptyset$ \\

    $\tuple{\mu X.\psi_X, w}$ &  $\{\tuple{\psi_X, w} \}$ & $\tuple{\nu X.\psi_X, w}$ &  $\{\tuple{\psi_X, w} \}$ \\

    $\tuple{X, w}$ & $\{ \tuple{\mu X.\psi_X, w} \}$ &
    $\tuple{X, w}$ & $\{ \tuple{\nu X.\psi_X, w} \}$ \\

  \end{tabular}
  \end{center}
\end{table}

\paragraph{Parity games}
A parity game is a tuple $\mathcal{P}=\tuple{V_\exists, V_\forall, v_0, E, \Omega}$ where two players $\exists$ and $\forall$ move a token in the graph $\tuple{V_\exists \cup V_\forall, E}$.
We suppose $V_\exists$ and $V_\forall$ are disjoint sets of \emph{vertices}; $E\subseteq (V_\exists\cup V_\forall)^2$ is a set of \emph{edges}; and $\Omega:V_\exists\cup V_\forall \to n$ is a \emph{parity function}.
If a player has no available move, then the other player wins.
In an infinite play $\rho$, the winner is determined by the following parity condition: $\exists$ wins $\rho$ iff the greatest parity which appears infinitely often in $\rho$ is even; otherwise, $\forall$ wins $\rho$.
$\exists$ wins the parity game $\mathcal{P}$ iff $\exists$ has a winning strategy; a winning strategy for $\exists$ is a function $\sigma$ from $V_\exists$ to $V_\exists\cup V_\forall$, where, if $\exists$ follows $\sigma$, all resulting plays are winning for them.
Similarly, $\forall$ wins $\mathcal{P}$ iff $\forall$ has a winning strategy.

Fix a parity game $\mathcal{P}=\tuple{V_\exists, V_\forall, v_0, E, \Omega}$.
The set of winning positions for $\exists$ in $\mathcal{P}$ is the set of positions $v$ where $\exists$ wins the parity game if the players start at $v$.
That is, $v\in V_\exists\cup V_\forall$ is a winning position for $\exists$ iff $\exists$ wins $\mathcal{P}_v=\tuple{V_\exists, V_\forall, v, E, \Omega}$.

Sometimes it is convenient to suppose that all parity games are \emph{tree-like}.
That is, for all $v\in V_\exists\cup V_\forall$, there is no path $v = v_0 E \cdots E v_n = v$, for all $n\in\mathbb{N}$.
Any parity game $\mathcal{P}=\tuple{V_\exists, V_\forall, v_0, E, \Omega}$ can be unfolded into a tree-like parity game.
In the unfolded game, instead of moving to a node $v$, the players move to a fresh copy of $v$.
The unfolded parity game is bisimilar to the original game.

\section{Evaluation games as parity games}
\label{sec::evaluation-games-as-parity-games}
Fix a model $M=\tuple{W,\{R_i\}_{i\in\Lambda},V}$, a world $w\in W$ and a $\mu$-formula $\varphi$.
We define a parity game $\mathcal{G}^\mathrm{P} = \mathcal{G}^\mathrm{P}(M,w\models\varphi)=\tuple{V_\exists, V_\forall, v_0, E, \Omega}$ which is equivalent to $\mathcal{G} = \mathcal{G}(M,w\models\varphi)$.

The set of positions $V_\exists$ consists of the positions owned by $\verifier$ in $\mathcal{G}$.
Similarly $V_\forall$ consists of the positions owned by $\forall$ in $\mathcal{G}$.
The set of edges $E$ consists of the transitions in $\mathcal{G}$.
The initial position $v_0$ is $\tuple{\varphi,w}$.
Define the parity function:
\begin{itemize}
    \item $\Omega(\tuple{\mu X.\psi,v}) = 2(i+\varepsilon)-1$ if $\mu X.\psi\in \Sigma^\mu_{2i+\varepsilon}\setminus\Pi^\mu_{2i+\varepsilon}$;
    \item $\Omega(\tuple{\nu X.\psi,v}) = 2i$ if $\nu X.\psi\in \Pi^\mu_{2i+\varepsilon}\setminus\Sigma^\mu_{2i+\varepsilon}$;
    \item $\Omega(\tuple{\psi,v}) = 0$ for $\psi$ not of the form $\eta X.\psi$;
\end{itemize}
where $\varepsilon \in \{0,1\}$.

\begin{proposition}
    \label{prop::eval-is-equiv-to-parity}
    Let $M=(W,\{R_i\}_{i\in\Lambda},V)$ be a Kripke model, $w\in W$, and $\varphi$ a $\mu$-formula.
    Then:
    \[
        \verifier \text{ wins } \mathcal{G}(M,w\models\varphi) \iff \exists \text{ wins } \mathcal{G}^\mathrm{P}(M,w\models\varphi).
    \]
\end{proposition}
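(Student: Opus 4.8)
The plan is to exploit the fact that $\mathcal{G}^{\mathrm{P}}$ and $\mathcal{G}$ are built on \emph{the same arena}. Under the identification $\verifier \leftrightarrow \exists$ and $\refuter \leftrightarrow \forall$, the positions, their owners, the admissible-move relation, and the initial position of $\mathcal{G}^{\mathrm{P}}$ are by construction exactly those of $\mathcal{G}$; only the winning condition on infinite plays is presented differently. Hence a strategy for $\verifier$ in $\mathcal{G}$ is literally a strategy for $\exists$ in $\mathcal{G}^{\mathrm{P}}$ and vice versa, and a strategy determines the same set of plays in both games. So the whole statement reduces to: for every maximal play $\rho$ from $v_0$, $\verifier$ wins $\rho$ under the evaluation-game rules if and only if $\exists$ wins $\rho$ under the parity condition given by $\Omega$. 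Granting this, a winning strategy for $\verifier$ in $\mathcal{G}$ transfers verbatim to a winning strategy for $\exists$ in $\mathcal{G}^{\mathrm{P}}$, and symmetrically for $\refuter$/$\forall$; since I need only the two implications making up the ``iff'', no appeal to determinacy is required.

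For a finite $\rho$ there is nothing to prove: in both games the stuck player loses, and the stuck player is the same since the move relations coincide. So fix an infinite $\rho$. By the evaluation-game winning condition recalled in the Game Semantics paragraph, there is a unique fixed-point subformula $\eta X.\psi$ of $\varphi$ such that $\tuple{\eta X.\psi, v}$ occurs infinitely often in $\rho$ for some $v$, and every formula occurring infinitely often in $\rho$ is a subformula of $\eta X.\psi$; moreover $\verifier$ wins $\rho$ precisely when $\eta = \nu$. I will show that the greatest priority occurring infinitely often in $\rho$ equals $\Omega(\tuple{\eta X.\psi, v})$ and that this number is even exactly when $\eta = \nu$. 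Since $\exists$ wins $\rho$ exactly when the greatest priority occurring infinitely often is even, this immediately gives $\exists$ wins $\rho \iff \eta = \nu \iff \verifier$ wins $\rho$. The parity of $\Omega(\tuple{\eta X.\psi,v})$ is read off the definition ($\nu$-formulas get even priorities, $\mu$-formulas odd), and the value is attained since $\tuple{\eta X.\psi, v}$ recurs; the content is the maximality.

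For maximality I must show $\Omega(\tuple{\theta,v'}) \le \Omega(\tuple{\eta X.\psi,v})$ for every formula $\theta$ occurring infinitely often (note $\Omega$ on $\tuple{\theta,v'}$ depends only on $\theta$). If $\theta$ is not of the form $\eta'Y.\chi$ this is trivial, as then $\Omega(\tuple{\theta,v'})=0$. Otherwise $\theta \in \mathrm{Sub}(\eta X.\psi)$, and I invoke two syntactic facts about the alternation hierarchy: the classes $\Sigma^\mu_n$ and $\Pi^\mu_n$ are closed under taking subformulas — so in particular the alternation depth of $\theta$ is at most $m := \mathrm{ad}(\eta X.\psi)$, and $\theta$ inherits membership in whichever of $\Sigma^\mu_m$, $\Pi^\mu_m$ contains $\eta X.\psi$; and the explicit values of $\Omega$ obey $\Omega(\tuple{\theta,v'}) \le \mathrm{ad}(\theta)$ while $\Omega(\tuple{\eta X.\psi,v}) \ge m-1$. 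A short case split on whether $\mathrm{ad}(\theta) \le m-2$, $\mathrm{ad}(\theta) = m-1$, or $\mathrm{ad}(\theta) = m$ then yields the inequality: in the first case it is strict by the displayed bounds; in the boundary cases one uses that $\theta$ sits on the same side of the hierarchy as $\eta X.\psi$ together with the explicit form of $\Omega$ to see that either it is strict or $\Omega(\tuple{\theta,v'}) = \Omega(\tuple{\eta X.\psi,v})$, which is all that is needed.

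The main obstacle I anticipate is the subformula-closure of the alternation classes and the accompanying careful analysis of how $\mathrm{ad}$ and the $\Sigma/\Pi$ classification behave under the subformula relation and under nesting of fixed points — this is exactly the place where the specific shape of $\Omega$ matters. Proving it calls for an induction following the inductive definition of $\Sigma^\mu_n$ and $\Pi^\mu_n$, with the substitution clause being the only subtle step, and it forces some care with degenerate fixed-point subformulas (those whose bound variable does not actually occur, which are equivalent to fixed-point-free formulas); as is customary I would exclude these by assuming $\varphi$ is given in a suitable normal form, or simply assign them priority $0$. The remaining ingredients — the characterisation of the evaluation-game winner by the outermost recurring fixed point (Proposition~\ref{prop::kripke-and-game-semantics} and the standard theory) and the strategy transfer of the first paragraph — are routine.
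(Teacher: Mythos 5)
Your proposal is correct and follows essentially the same route as the paper's own proof: identify the two games as being played on the same arena so strategies transfer verbatim, dispose of finite plays by noting the stuck player coincides, and reduce infinite plays to showing that the greatest infinitely recurring priority is attained at the outermost infinitely recurring fixed-point subformula and is even exactly when that subformula is a $\nu$-formula. You in fact spell out the key monotonicity of $\Omega$ along the subformula order (via subformula-closure of $\Sigma^\mu_n$ and $\Pi^\mu_n$) in more detail than the paper, which simply asserts the inequality $\Omega(\tuple{\nu Y.\theta,v})\geq \Omega(\tuple{\mu X.\psi,v})$.
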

\begin{proof}
    Denote $\mathcal{G}(M,w\models\varphi)$ by $\mathcal{G}$ and $\mathcal{G}^\mathrm{P}(M,w\models\varphi)$ by $\mathcal{G}^\mathrm{P}$.
    As both games are on the same board, strategies for $\verifier$ and $\refuter$ in $\mathcal{G}$ are strategies for $\exists$ and $\forall$ in $\mathcal{G}^\mathrm{P}$.
    As any position is owned by $\verifier$ in $\mathcal{G}$ iff it is owned by $\exists$ in $\mathcal{G}$, any finite run is winning for $\verifier$ iff it is winning for $\exists$.

    Consider an infinite run $\rho$.
    The parity $\Omega(\tuple{\psi,v})$ is odd iff $\psi\in\Sigma^k\setminus\Pi^k$ for some $k\in \mathbb{N}$.
    If the greatest infinitely often occurring parity in $\rho$ is odd, then some $\mu X.\psi$ is the outermost infinitely often occurring fixed-point formula.
    Otherwise, if $\mu X.\psi \in\mathrm{Sub(\nu Y.\theta)}$ and $\nu Y.\theta$ is the outermost infinitely occurring fixed-formula formula, then $\Omega(\tuple{\nu Y.\theta,v})\geq \Omega(\tuple{\mu X.\psi,v})$ and $\Omega(\tuple{\nu Y.\theta,v})$ is even.
    Similarly, if the greatest infinitely often occurring parity in $\rho$ is even, then some $\nu X.\psi$ is the outermost infinitely often occurring fixed-point formula.
    Either way, $\rho$ is winning for $\verifier$ in $\mathcal{G}$ iff $\rho$ is winning for $\exists$ in $\mathcal{G}^\mathrm{P}$.
\end{proof}

\section{Winning region formulas}
\label{sec::winning-region-formulas}
Let $\mathsf{F}_0$ and $\mathsf{F}_1$ be classes of frames with signatures $\{0\}$ and $\{1\}$, respectively.
Suppose that $\circ\leftarrow\circ\to\circ$ is a subframe of $\mathsf{F}_0$ and that $\circ\to\circ$ is a subframe of $\mathsf{F}_1$.
Fix $F_0\in \mathsf{F}_0$ and $F_1\in \mathsf{F}_1$ witnessing these facts.
Given a parity game $\mathcal{P}$ we will define an associated Kripke model $\mathcal{P}^\mathrm{K}$ with frame in $\mathsf{F}_0\otimes\mathsf{F}_1$.
We will also define winning region $\mu$-formulas $W_n$, for all $n\in\mathbb{N}$.
If $\mathcal{P}$ is a parity game which uses parities up to $n$, then $\exists$ wins $\mathcal{P}$ starting at $v$ iff $\mathcal{P}^\mathrm{K},v \models W_n$.

Let $\mathcal{P}=\tuple{V_\exists, V_\forall, v, E, \Omega}$ be a parity game.
We represent $\mathcal{P}$ as a birelational Kripke model $\mathcal{P}^K =\tuple{W,R_0,R_1,V}$.
The set $W$ of possible worlds will consist of a world $\underline{v}$ for each state $v\in V_\exists, V_\forall$ and a countable supply of other worlds.
If $v\in V_\exists, V_\forall$ and $vE = \{v_0, \dots v_n\}$, then we will represent the connection between $v$ and the $v_i$ using fresh isomorphic copies of $F_0$ and $F_1$.
We first use a copy of $F_0$ to choose between $v_0$ and the other vertices, then we use copies of $F_1$ to confirm the choices.
Similarly, we use a copy of $F_0$ to choose between $v_1$ and the other vertices, and copies of $F_1$ to confirm the choices.
We repeat this procedure until we use up all the $v_i$.
By using fresh copies of $F_0$ and $F_1$, we guarantee that the resulting frame is in $\mathsf{F}_0 \otimes \mathsf{F}_1$.
We denote by $\underline{v}, \underline{v}_0, \cdots$ the worlds of $\mathcal{P}^\mathrm{K}$ corresponding to the positions $v, v_0, \cdots$; we do not name the other worlds connecting them.
An example of this construction is depicted in Figure \ref{figure::bimodal-strictness-bifurcation}.

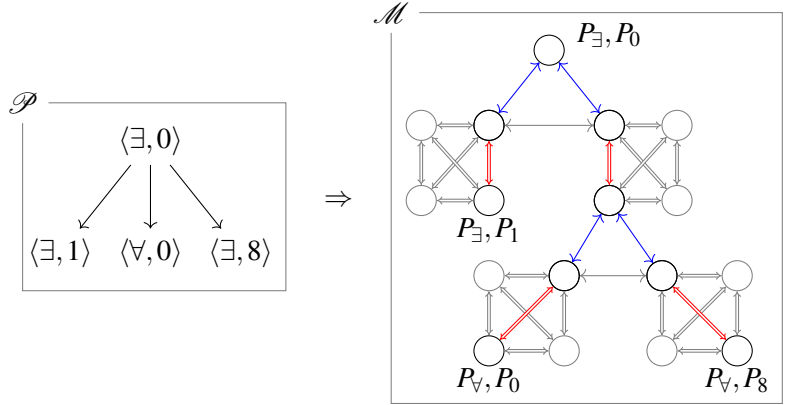
\begin{figure}
\centering
    \usetikzlibrary {arrows.meta}
    \tikzstyle{world}=[circle,draw,minimum size=4mm,inner sep=0pt]
    \tikzstyle{first}=[<->,blue]
    \tikzstyle{second}=[double, arrows = {Implies-Implies},red]
    \tikzstyle{ignore}=[gray]
      \begin{tikzpicture}
        \node (m) at (-0.5,0.5) {$\mathcal{P}$};

        \node (w) at (1.2,  0) {$\tuple{\exists,0}$};
        \node (v) at (0, -1.5) {$\tuple{\exists,1}$};
        \node (u) at (1.2, -1.5) {$\tuple{\forall,0}$};
        \node (r) at (2.4, -1.5) {$\tuple{\exists,8}$};

        \draw[->] (w) -- (v);
        \draw[->] (w) -- (u);
        \draw[->] (w) -- (r);

        \draw[-,color=gray] (m) -- (-0.5,-2) -- (3,-2) -- (3, 0.5) -- (m);

        \node (arrow) at (3.7, -0.8) {$\Rightarrow$};

      \begin{scope}[shift={(5,1.2)}]
        \node (m) at (-0.6,0.5) {$\mathcal{M}$};

        \node[world] (w1) at (1.5,  0) {};
        \node[world] (w2) at (0.7, -1) {};
        \node[world] (w3) at (2.3, -1) {};

        \draw[first] (w1) -- (w2);
        \draw[first] (w1) -- (w3);
        \draw[first,ignore] (w2) -- (w3);

        \node[world] (v1) at (0.7, -1) {};
        \node[world] (v2) at (0.7, -2) {};
        \node[world,ignore] (v3) at (-0.2  , -1) {};
        \node[world,ignore] (v4) at (-0.2  , -2) {};

        \draw[second] (v1) -- (v2);
        \draw[second,ignore] (v1) -- (v3);
        \draw[second,ignore] (v1) -- (v4);
        \draw[second,ignore] (v2) -- (v3);
        \draw[second,ignore] (v2) -- (v4);
        \draw[second,ignore] (v3) -- (v4);

        \node[world] (u1) at (2.3, -1) {};
        \node[world] (u2) at (2.3, -2) {};
        \node[world,ignore] (u3) at (3.2  , -1) {};
        \node[world,ignore] (u4) at (3.2  , -2) {};

        \draw[second] (u1) -- (u2);
        \draw[second,ignore] (u1) -- (u3);
        \draw[second,ignore] (u1) -- (u4);
        \draw[second,ignore] (u2) -- (u3);
        \draw[second,ignore] (u2) -- (u4);
        \draw[second,ignore] (u3) -- (u4);

        \node[world] (r1) at (2.3, -2) {};
        \node[world] (r2) at (1.7,   -3) {};
        \node[world] (r3) at (3, -3) {};

        \draw[first] (r1) -- (r2);
        \draw[first] (r1) -- (r3);
        \draw[first,ignore] (r2) -- (r3);

        \node[world] (s1) at (1.7, -3) {};
        \node[world,ignore] (s2) at (1.7, -4) {};
        \node[world,ignore] (s3) at (0.7  , -3) {};
        \node[world] (s4) at (0.7  , -4) {};

        \draw[second,ignore] (s1) -- (s2);
        \draw[second,ignore] (s1) -- (s3);
        \draw[second,ignore] (s2) -- (s3);
        \draw[second,ignore] (s2) -- (s4);
        \draw[second,ignore] (s3) -- (s4);
        \draw[second] (s1) -- (s4);

        \node[world] (s1) at (3, -3) {};
        \node[world,ignore] (s2) at (3, -4) {};
        \node[world,ignore] (s3) at (4  , -3) {};
        \node[world] (s4) at (4  , -4) {};

        \draw[second,ignore] (s1) -- (s2);
        \draw[second,ignore] (s1) -- (s3);
        \draw[second,ignore] (s2) -- (s3);
        \draw[second,ignore] (s2) -- (s4);
        \draw[second,ignore] (s3) -- (s4);
        \draw[second] (s1) -- (s4);

        \node (w-val) at (2.3,  0.2) {$P_\exists,P_0$};
        \node (v-val) at (0.7,  -2.4) {$P_\exists,P_1$};
        \node (u-val) at (0.7,  -4.4) {$P_\forall,P_0$};
        \node (r-val) at (4,  -4.4) {$P_\forall,P_8$};

        \draw[-,color=gray] (m) -- (-0.6,-4.7) -- (4.6,-4.7) -- (4.6, 0.5) -- (m);
      \end{scope}
      \end{tikzpicture}
    \caption{Example of a parity game $\mathcal{P}$ and the corresponding bimodal model $\mathcal{P}^K$.
    The model $\mathcal{P}^K$ is built using copies of $\mathsf{S5}$ models.}
    \label{figure::bimodal-strictness-bifurcation}
\end{figure}

We will use fresh propositional symbols $\bd$, $\st$, $\prezero$, $\preone$, $\nextzero$, and $\nextone$ when defining $\mathcal{P}^K$.
The proposition symbol $\bd$ indicate that a world is used to represent the parity game.
That is, only the isomorphic copies of $\circ\leftarrow\circ\to\circ$ and $\circ\to\circ$ used in the paragraph above satisfy $\bd$.
The proposition symbol $\st$ indicates that a world corresponds to a position in the parity game.
That is, it holds only on worlds which are $\underline{v}$ for some $v\in V_\exists\cup V_\forall$.
The proposition symbols $\prezero$, $\preone$, $\nextzero$, and $\nextone$ are used to represent the direction of the moves in the parity game in the Kripke model.
$\prezero$ holds when we are making a choice and $\nextzero$ holds after we make a choice.
Similarly, $\preone$ holds when we are confirming a choice and $\nextone$ holds after we confirmed a choice.
These propositional symbols allow us to stay in the part of the model which represents the parity game.
They will also guarantee that sequences moves in evaluation games over $\mathcal{P}^K$ correspond to moves in $\mathcal{P}$.

The proposition symbols $P_\exists$ and $P_\forall$ indicate the ownership of the positions: $P_\exists$ holds at $\underline{v}$ iff $v\in V_\exists$ and  $P_\forall$ holds at $\underline{v}$ iff $v\in V_\forall$.
The proposition symbols $P_0, \dots, P_n$ indicate the parities of the positions: $P_i$ holds at $\underline{v}$ iff $\Omega(v) = i$.
At each $\underline{v}$, exactly one of the $P_i$ will hold.
The proposition symbols $P_\exists$, $P_\forall$, $P_0,\dots, P_n$ are false at worlds which are not of the form $\underline{v}$ for some $v\in V_\exists \cup V_\forall$.
This finishes the definition of $\mathcal{P}^\mathrm{K}$.

To define the winning region formulas $W_n$, we use the following shorthand formulas:
\begin{itemize}
    \item $\blacklozenge \varphi := \nu Y.\prezero \land \bd \land \Diamond_0(\nextzero \land \preone \land \bd\land \Diamond_1 (\nextone \land \bd \land ((Y \land \neg\st) \lor (\varphi \land \st))))$; and
    \item $\blacksquare \varphi := \nu Y.\prezero \land \bd \to \Box_0(\nextzero \land \preone \land \bd\to \Box_1 (\nextone \land \bd \to ((Y\land \neg\st) \land (\varphi\land \st))))$,
\end{itemize}
where $Y$ is a fresh variable symbol.
We use these modalities to represent a move in $\mathcal{P}$ as multiple moves in evaluation game $\mathcal{P}^\mathrm{K},\underline{v}\models W_n$.
Given $n\in \mathbb{N}$, define:
\begin{align*}
    W_n := &\eta X_n\dots \nu X_2 \mu X_1 \nu X_0.\bigvee_{0\leq j\leq n}[(P_j \land P_{\exists} \land \blacklozenge X_j) \lor (P_j \land P_{\forall} \land \blacksquare X_j)].
\end{align*}
The formula $W_n$ defines the winning positions of $\exists$ in parity games using parities up to $n$:
\begin{proposition}
    \label{prop::win-reg-fmls-are-correct}
    Let $\mathcal{P}=\tuple{V_\exists, V_\forall, v_0, E, \Omega}$ be a parity game.
    If $\max\{\Omega(v) \mid v\in W\}\leq n$, then
    \[
        \mathcal{P}^K,\underline{v}_0\models W_n \text{ iff } \text{$\exists$ wins $\mathcal{P}$.}
    \]
\end{proposition}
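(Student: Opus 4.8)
The plan is to read $W_n$ through the game semantics (Proposition~\ref{prop::kripke-and-game-semantics}) together with the reduction of evaluation games to parity games (Proposition~\ref{prop::eval-is-equiv-to-parity}). By those two results, $\mathcal{P}^K,\underline{v}_0\models W_n$ holds iff $\exists$ wins the parity game $\mathcal{G}^\mathrm{P}(\mathcal{P}^K,\underline{v}_0\models W_n)$, so it suffices to show that $\exists$ wins this parity game exactly when $\exists$ wins $\mathcal{P}$. (For the construction of $\mathcal{P}^K$ to make literal sense we assume, as we may for the intended application, that $\mathcal{P}$ is finitely branching.) The whole argument is a comparison between $\mathcal{G}^\mathrm{P}(\mathcal{P}^K,\underline{v}_0\models W_n)$ and $\mathcal{P}$ in which each single move of $\mathcal{P}$ is refined into a short, essentially forced detour through one of the gadgets, and the fixed-point structure of $W_n$ is arranged so that this detour carries exactly the right parity.

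First I would describe the shape of a play of the evaluation game $\mathcal{G}(\mathcal{P}^K,\underline{v}_0\models W_n)$. From $\tuple{W_n,\underline{v}_0}$ the players are forced down through the binders $\eta X_n,\dots,\eta X_0$ to the body $\Phi:=\bigvee_{0\le j\le n}[(P_j\land P_\exists\land\blacklozenge X_j)\lor(P_j\land P_\forall\land\blacksquare X_j)]$. Because each position-world $\underline{v}$ satisfies exactly the parity proposition $P_{\Omega(v)}$ and exactly one of $P_\exists,P_\forall$, Verifier is forced to choose the unique satisfiable disjunct and Refuter gains nothing by challenging the true atoms; hence the play continues into $\blacklozenge X_{\Omega(v)}$ when $v\in V_\exists$ and into $\blacksquare X_{\Omega(v)}$ when $v\in V_\forall$. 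The crucial combinatorial point is that the auxiliary propositions $\bd,\st,\prezero,\preone,\nextzero,\nextone$ together with the syntactic shape of $\blacklozenge$ and $\blacksquare$ force the ensuing play to walk along the skeleton of the gadget attached to $\underline{v}$ --- alternately through $\circ\leftarrow\circ\to\circ$-copies (a ``choose'' step, via $\Diamond_0$ in $\blacklozenge$ or $\Box_0$ in $\blacksquare$) and $\circ\to\circ$-copies (a ``confirm'' step, via $\Diamond_1$ or $\Box_1$) --- until it arrives at some $\underline{v'}$ with $v\mathrel{E}v'$, at which point the formula unfolds $X_{\Omega(v)}$ and the body is re-entered at $\underline{v'}$. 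One checks that along this walk every deviation from the skeleton is immediately losing for the deviating player (in $\blacklozenge$, Refuter exposes a false $\bd$/$\nextzero$/$\dots$-atom; in $\blacksquare$, Verifier's implications become vacuously true), that the ``choose'' steps are controlled by $\verifier$ when $v\in V_\exists$ and by $\refuter$ when $v\in V_\forall$ and realise precisely the successors available to the corresponding player of $\mathcal{P}$, and that --- since $vE$ is finite --- each such walk is finite, so the inner $\nu Y$ of $\blacklozenge$ and $\blacksquare$ is unfolded only boundedly often and its being least or greatest is immaterial. Consequently the plays of $\mathcal{G}$ in which both players stay on the skeletons are in natural bijection, compatibly with strategies, with the plays $v_0,v_1,v_2,\dots$ of $\mathcal{P}$ starting at $v_0$ (finite precisely when some $v_k$ is a dead end).

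Next I would match the outcomes. A finite play gets stuck at a position-world $\underline{v}$ with $vE=\emptyset$; tracking who owns the current subformula of $\blacklozenge/\blacksquare$ shows the stuck player is $\verifier$ iff $v\in V_\exists$, exactly as in $\mathcal{P}$. For an infinite play I would compute the parities of $\mathcal{G}^\mathrm{P}$: from the definition of the alternation classes, $\eta X_i\dots\nu X_0.\Phi$ lies in $\Sigma^\mu_{i+1}\setminus\Pi^\mu_{i+1}$ for $i$ odd and in $\Pi^\mu_{i+1}\setminus\Sigma^\mu_{i+1}$ for $i$ even, so $\Omega(\tuple{\eta X_i.\psi,w})=i$, while $\Omega=0$ on every other position (in particular on the $\nu Y$-positions). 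Hence, in a play that stays on the skeletons and projects to $v_0,v_1,\dots$, moving out of $v_k$ triggers an unfolding of $X_{\Omega(v_k)}$ whose highest parity is exactly $\Omega(v_k)$, so the largest parity occurring infinitely often in the $\mathcal{G}^\mathrm{P}$-play equals $\max\{\Omega(v_k):v_k\text{ occurs infinitely often}\}$; thus the $\mathcal{G}^\mathrm{P}$-play is won by $\exists$ iff the $\mathcal{P}$-play is. Using positional determinacy of parity games (so all strategies may be taken positional), a winning strategy for $\exists$ in $\mathcal{P}$ then lifts to one for $\exists$ in $\mathcal{G}^\mathrm{P}$ by playing the matching gadget moves and the forced propositional moves, while conversely a winning strategy for $\exists$ in $\mathcal{G}^\mathrm{P}$ must stay on the skeletons and hence projects to a winning strategy for $\exists$ in $\mathcal{P}$; with Propositions~\ref{prop::kripke-and-game-semantics} and~\ref{prop::eval-is-equiv-to-parity} this gives $\mathcal{P}^K,\underline{v}_0\models W_n$ iff $\exists$ wins $\mathcal{P}$.

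I expect the main obstacle to be the combinatorial lemma of the second paragraph: a careful case-by-case verification --- over the worlds of the fixed frames $F_0,F_1$ and the six auxiliary propositions --- that the gadgets faithfully implement ``the current player of $\mathcal{P}$ picks a successor'', that leaving a skeleton is always punished, and that a single move of $\mathcal{P}$ is simulated by a finite skeleton walk. Once that faithfulness is pinned down, the parity bookkeeping and the treatment of finite plays are routine.
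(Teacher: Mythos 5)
Your proposal is correct and follows essentially the same route as the paper: both arguments simulate each move of $\mathcal{P}$ by a forced walk through the corresponding gadget of $\mathcal{P}^K$, translate strategies back and forth between $\mathcal{P}$ and the game over $\tuple{W_n,\underline{v}_0}$, and match outcomes by who gets stuck on finite plays and by the correspondence between the outermost infinitely regenerated $X_j$ and the highest infinitely occurring parity on infinite plays. The only cosmetic difference is that you pass through $\mathcal{G}^\mathrm{P}$ and compute the parity function explicitly, whereas the paper works directly with the evaluation game's winning condition; the combinatorial ``faithfulness of the gadgets'' step you flag as the main obstacle is likewise left at the same level of detail in the paper.
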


\begin{proof}
    Suppose $\mathcal{P}^\mathrm{K},\underline{v}_0\models W_n$.
    Let $\sigma$ be a winning strategy for $\verifier$ in the evaluation game $\mathcal{G} := \mathcal{G}(\mathcal{P}^\mathrm{K},\underline{v}_0\models W_n)$.
    We define a winning strategy $\sigma'$ for $\exists$ in $\mathcal{P}$ while playing simultaneous runs of $\mathcal{G}$ and $\mathcal{P}$.

    The games $\mathcal{G}$ and $\mathcal{P}$ start at positions $\tuple{W_n,\underline{v}_0}$ and $v_0$, respectively.
    First, have the players move to the position
    \[
        \left\langle \bigvee_{0\leq j\leq n}[(P_j \land P_{\exists} \land \blacklozenge X_j) \lor (P_j \land P_{\forall} \land \blacksquare X_j), \underline{v}_0\right\rangle
    \]
    in $\mathcal{G}$.

    Now, suppose the players are at positions
    \[
        \left\langle \bigvee_{0\leq j\leq n}[(P_j \land P_{\exists} \land \blacklozenge X_j) \lor (P_j \land P_{\forall} \land \blacksquare X_j), \underline{v}\right\rangle
    \]
    in $\mathcal{G}$ and $v$ in $\mathcal{P}$, respectively.
    As $\sigma$ is winning for $\verifier$ in $\mathcal{G}$, $\sigma$ does not make any immediately losing move.
    That is, $\verifier$ picks the disjuncts according to $v$'s parity and owner.
    We also have $\forall$ make non-immediately losing moves.
    The players eventually reach one of two possible cases:

    \emph{Case 1.} The players are in the position $\tuple{\blacklozenge X_j, \underline{v}}$ in $\mathcal{G}$, with $v\in V_\exists$.
    By our choice of $\sigma$, $\verifier$ must eventually reach a position of the form $\tuple{X_j,\underline{v'}}$.
    Then $\sigma'$ tells $\exists$ to move to $v'$ in $\mathcal{P}$.

    \emph{Case 2.} The players are in the position $\tuple{\blacksquare X_j,\underline{v}}$ in $\mathcal{G}$ and $v\in V_\forall$ in $\mathcal{P}$.
    If $\forall$ moves to $v'$, $\refuter$ moves to $\tuple{X_j,\underline{v'}}$ in $\mathcal{G}$ in finitely many steps.

    Now, have the players regenerate $X_j$ in $\mathcal{G}$ and move until they get to positions of the form
    \[
        \left\langle \bigvee_{0\leq j\leq n}[(P_j \land P_{\exists} \land \blacklozenge X_j) \lor (P_j \land P_{\forall} \land \blacksquare X_j), \underline{v'}\right\rangle\text{ and } v'
    \]
    in $\mathcal{G}$ and $\mathcal{P}$, respectively.
    We are back to the initial situation, and we repeat this process to define $\sigma'$.

    We consider parallel runs $\rho$ in $\mathcal{G}$ and $\rho'$ in $\mathcal{P}$ played according to $\sigma$ and $\sigma'$, respectively.
    Then either both runs are finite or both runs are infinite.
    If $\rho'$ is finite, this means that one of the players didn't have a move available to play at a position $v$ in $\mathcal{P}$.
    Therefore, one of the players couldn't find a valid position to play after $\tuple{\blacklozenge X_{\Omega(v)}, \underline{v}}$ or $\tuple{\blacksquare X_{\Omega(v)}, \underline{v}}$.
    The former is not possible by our choice of $\sigma$, so it must be $\forall$ who could not make a move.
    Therefore $\exists$ wins $\rho'$.
    If $\rho$ is infinite, then the outermost infinitely often regenerated fixed-point operator is some $\nu X_{2k}$.
    By the construction of $\sigma'$ the greatest infinitely often occurring parity must be $2k$.
    Therefore $\exists$ wins $\rho'$.
    We can now conclude that $\sigma'$ is a winning strategy for $\exists$ in $\mathcal{P}$.

    On the other hand, suppose $\exists$ wins $\mathcal{P}$ via $\sigma'$.
    We define $\sigma$ for $\verifier$ in $\mathcal{G}$.
    At vertices of the form $\tuple{\blacklozenge X_j,\underline{v}}$ in $\mathcal{G}$, have $\verifier$ move to
    \[
        \sigma(\tuple{\blacklozenge X_j,\underline{v}}) := \tuple{X_j,\underline{v'}},
    \]
    with $v'= \sigma'(v)$.
    On other positions, have $\sigma$ be the non-immediately losing moves for $\verifier$.

    Consider parallel runs $\rho$ in $\mathcal{G}$ and $\rho'$ in $\mathcal{P}$ played according to $\sigma$ and $\sigma'$, respectively.
    If $\rho$ is finite, then one of the players made a move which invalidates one of the auxiliary propositions, or did not have an adequate moves after a position of the form $\tuple{\blacklozenge X_j,\underline{v}}$ or $\tuple{\blacksquare X_j,\underline{v}}$.
    By the choice of $\sigma'$ and definition of $\sigma$, $\verifier$ makes no such move.
    So it must be $\refuter$ who made such move and lost the game; therefore $\verifier$ wins.
    If $\rho$ is infinite, the greatest parity appearing infinitely often in $\rho'$ is even.
    Therefore the outermost infinitely often occurring fixed-point operator in $\rho$ is a $\nu$-operator.
    $\rho$ is winning for $\verifier$.
    Therefore $\sigma$ is a winning strategy for $\verifier$ in $\mathcal{G}$.
\end{proof}

Given an evaluation game $\mathcal{G}(M,w\models\varphi)$, we define the Kripke model $\mathcal{G}^\mathrm{K}(M,w\models\varphi)$ as $(\mathcal{G}^\mathrm{P}(M,w\models\varphi))^\mathrm{K}$.
As evaluation games are also parity games, the $W_n$ also define winning regions for $\mathsf{V}$ in evaluation games:
\begin{proposition}
    \label{prop::win-reg-fmls-are-correct-for-eval-games}
    Let $M=(W,R_0,R_1,V)$ be a bimodal Kripke model, $w\in W$, and $\varphi$ a bimodal $\mu$-formula.
    If $n\geq 1$ and the greatest parity used in $\mathcal{G}^\mathrm{P}(M,w\models\varphi)$ is less or equal than $n$, then:
    \[
        M,w\models \varphi \text{ iff } \mathcal{G}^K(M, w\models \varphi), \tuple{\varphi,w}\models W_n.
    \]
\end{proposition}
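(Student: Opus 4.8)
The plan is to read the statement off the three equivalences already proved, applied to the particular parity game $\mathcal{P} := \mathcal{G}^\mathrm{P}(M,w\models\varphi)$. Concretely, I would assemble the chain
\[
    M,w\models\varphi \iff \verifier \text{ wins } \mathcal{G}(M,w\models\varphi) \iff \exists \text{ wins } \mathcal{P} \iff \mathcal{P}^\mathrm{K},\underline{v}_0\models W_n,
\]
where the first equivalence is Proposition~\ref{prop::kripke-and-game-semantics}, the second is Proposition~\ref{prop::eval-is-equiv-to-parity}, and the third is Proposition~\ref{prop::win-reg-fmls-are-correct}. So the whole argument is a matter of checking that the hypotheses of Proposition~\ref{prop::win-reg-fmls-are-correct} are met and that the names occurring in the statement are the names occurring in the chain.

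For the hypothesis check: Proposition~\ref{prop::win-reg-fmls-are-correct} requires the largest parity appearing in $\mathcal{P}$ to be at most $n$; by the definition of the parity function in Section~\ref{sec::evaluation-games-as-parity-games}, this largest parity is exactly what the statement calls ``the greatest parity used in $\mathcal{G}^\mathrm{P}(M,w\models\varphi)$'', which is $\leq n$ by assumption. Also, ``$\exists$ wins $\mathcal{P}$'' means, by the convention fixed when a parity game is written as $\tuple{V_\exists,V_\forall,v_0,E,\Omega}$, that $\exists$ wins starting at the designated initial vertex $v_0$, so Proposition~\ref{prop::win-reg-fmls-are-correct} applies with no reindexing. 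For the bookkeeping: by definition $\mathcal{G}^\mathrm{K}(M,w\models\varphi) = (\mathcal{G}^\mathrm{P}(M,w\models\varphi))^\mathrm{K} = \mathcal{P}^\mathrm{K}$; the initial vertex of $\mathcal{P}$ is $v_0 = \tuple{\varphi,w}$; and the world of $\mathcal{P}^\mathrm{K}$ representing it, written $\underline{v}_0$ in Section~\ref{sec::winning-region-formulas}, is precisely the world that the present statement denotes $\tuple{\varphi,w}$. Hence $\mathcal{P}^\mathrm{K},\underline{v}_0\models W_n$ is literally $\mathcal{G}^\mathrm{K}(M,w\models\varphi),\tuple{\varphi,w}\models W_n$, and the chain closes.

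I do not expect a real obstacle: the mathematical content was already spent in Propositions~\ref{prop::eval-is-equiv-to-parity} and~\ref{prop::win-reg-fmls-are-correct}, and what remains is essentially a definitional unwinding. The only points that warrant a word of care are: (i) that the parity function of Section~\ref{sec::evaluation-games-as-parity-games} indeed takes values in a finite initial segment of $\mathbb{N}$, so that $\mathcal{G}^\mathrm{P}(M,w\models\varphi)$ is an admissible input to Proposition~\ref{prop::win-reg-fmls-are-correct}; and (ii) the construction $\mathcal{P} \mapsto \mathcal{P}^\mathrm{K}$ is phrased for vertices of finite out-degree, which for the evaluation game holds whenever $M$ is finitely branching — harmless for the metamathematical use this proposition is put to. The hypothesis $n\geq 1$ plays no role in the equivalence itself (any $n$ bounding the parities suffices); it is retained because the later strictness argument in Section~\ref{sec::strictness} needs $W_n$ to genuinely alternate, i.e.\ to contain at least the block $\mu X_1\nu X_0$.
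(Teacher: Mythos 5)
Your proposal is correct and matches the paper's own proof exactly: the paper establishes the statement by the same three-step chain through Propositions~\ref{prop::kripke-and-game-semantics}, \ref{prop::eval-is-equiv-to-parity}, and \ref{prop::win-reg-fmls-are-correct}. Your additional remarks on hypothesis-checking and the identification of $\underline{v}_0$ with $\tuple{\varphi,w}$ are sound bookkeeping that the paper leaves implicit.
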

\begin{proof}
    We have:
    \begin{align*}
        M,w\models \varphi
        &\text{ iff } \verifier\text{ wins }\mathcal{G}(M,w\models \varphi) \\
        &\text{ iff } \exists\text{ wins }\mathcal{G}^\mathrm{P}(M,w\models \varphi) \\
        &\text{ iff } \mathcal{G}^\mathrm{K}(M,w\models \varphi), \tuple{\varphi,w}\models W_n.
    \end{align*}
    The first equivalence follows from Proposition \ref{prop::kripke-and-game-semantics}, the second one follows from Proposition \ref{prop::eval-is-equiv-to-parity}, the third one follows from Proposition \ref{prop::win-reg-fmls-are-correct}.
\end{proof}

\section{Strictness}
\label{sec::strictness}
Fix classes of frames $\mathsf{F}_0$ and $\mathsf{F}_1$ with signatures $\{0\}$ and $\{1\}$, respectively.
We show that, if $\circ\leftarrow\circ\to\circ$ is a subframe of $\mathsf{F}_0$ and that $\circ\to\circ$ is a subframe of $\mathsf{F}_1$, then the $\mu$-calculus' alternation hierarchy is strict over $\mathsf{F}_0\otimes\mathsf{F}_1$.

Let $(M,w) = \tuple{W,R_0,R_1,V,w}$ and $(M',w') = \tuple{W',R_0',R_1',V',w'}$ be pointed Kripke models without loops in their graphs.
$(M,w)$ is \emph{isomorphic} to $(M',w')$ iff there is a bijection $I:W\to W'$ such that:
\begin{itemize}
  \item $I(w)=w'$;
  \item for all $v,v'\in W$, $v R_0 v'$ iff $I(v) R_0' I(v')$;
  \item for all $v,v'\in W$, $v R_1 v'$ iff $I(v) R_1' I(v')$; and
  \item for all $v\in W$, $v\in V(P)$ iff $I(v) \in V'(P)$.
\end{itemize}
For all $n\in\mathbb{N}$, let $(M\upharpoonright n,w)$ be the submodel of $(M,w)$ obtained by restricting $W$ to worlds with distance less than $n$ from $w$.
We say $(M,w)$ is \emph{$n$-isomorphic} to $(M',w')$ if and only if $(M\upharpoonright n,w)$ is isomorphic to $(M'\upharpoonright n,w')$.
For any $(M,w)$, $(M\restrictedto 0, w)$ is an empty Kripke model.
We assume the empty Kripke model is isomorphic to itself.

Given a $\mu$-formula $\varphi$, let $f_\varphi$ be the function mapping a pointed model to the pointed Kripke model representing its evaluation game with respect to $\varphi$.
That is $f_{\varphi}(M,w) = (\mathcal{G}^\mathrm{K}(M,w\models\varphi),\tuple{\varphi,w})$, for all pointed models $(M,w)$.
\begin{lemma}
    \label{lem::n-iso-then-f-is-n+1-iso}
    Fix a $\mu$-formula $\varphi$.
    If $(M,w)$ and $(M',w')$ are $n$-isomorphic via a function $I$, then $f_{\varphi\land\varphi}(M,w)$ and $f_{\varphi\land\varphi}(M',w')$ are $(n+1)$-isomorphic via the function $J$ defined by:
    \[
      J(\tuple{\psi,w}) = (\tuple{\psi,I(w)}),
    \]
    for all world $w$ of $M$ and subformula $\psi$ of $\varphi$.
\end{lemma}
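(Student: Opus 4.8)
The plan is to write down $J$ as an explicit isomorphism between an initial neighbourhood of the root $\tuple{\varphi\land\varphi,w}$ in $\mathcal{G}^\mathrm{K}(M,w\models\varphi\land\varphi)$ and the corresponding neighbourhood of $\tuple{\varphi\land\varphi,w'}$ in $\mathcal{G}^\mathrm{K}(M',w'\models\varphi\land\varphi)$, and to check that this neighbourhood is already of radius $n+1$. The displayed clause only specifies $J$ on the worlds $\underline{\tuple{\psi,v}}$ coming from positions of the evaluation game, so I would first make $J$ total: put $J(\underline{\tuple{\psi,v}}) = \underline{\tuple{\psi,I(v)}}$ for every position $\tuple{\psi,v}$ of $\mathcal{G}^\mathrm{P}(M,w\models\varphi\land\varphi)$ with $v$ in the domain of $I$, and, for a world $u$ lying in the fresh copy of $F_0$ or $F_1$ that the $\mathcal{P}^\mathrm{K}$-construction inserts to encode one choice/confirmation step between two positions $p,p'$, let $J(u)$ be the matching world of the matching fresh copy inserted between $J(p)$ and $J(p')$ on the primed side. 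This is well defined because both game models are produced by the same recipe, with fresh copies of the \emph{same} fixed frames $F_0,F_1$ indexed by the edge data of the two evaluation games, and $I$ matches those data.

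I would then prove, by induction on $d\le n+1$, that $J$ restricts to an isomorphism between the ball of radius $d$ around $\tuple{\varphi\land\varphi,w}$ and the ball of radius $d$ around $\tuple{\varphi\land\varphi,w'}$, and that a world of one model lies in its $d$-ball iff its $J$-image lies in the $d$-ball of the other. The base case $d=0$ is trivial: both balls are the singleton root, and the root carries the same valuation in both models ($\st$, $P_\forall$ and $P_0$, since $\varphi\land\varphi$ is a conjunction and not a fixed-point formula) independently of $\varphi$ and of the valuations $V,V'$. For the inductive step I would use two kinds of fact. First, local determination of \emph{structure}: an edge inside a ball is either internal to one inserted copy of $F_0$ or $F_1$ — preserved since $J$ matches these copies isomorphically — or is one of the choice/confirmation edges, whose presence is dictated by the move relation of $\mathcal{G}^\mathrm{P}$, and that relation is dictated, at modal positions, by the relations $R_0,R_1$ of $M$ and, at atomic positions, by whether $v\in V(P)$ — both preserved by $I$ on the $n$-balls; similarly the valuation symbols $\bd,\st,\prezero,\preone,\nextzero,\nextone$ depend only on the role a world plays in the gadget construction, $P_\exists$ and $P_\forall$ depend only on the principal connective of $\psi$, and the $P_j$ depend only on which fixed-point subformula and its alternation class, so none of them depend on $V$ or $V'$. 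Second, a \emph{distance bound}: no new "bad" world can enter a ball of radius $\le n+1$.

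The distance bound is the step I expect to be the main obstacle, and it is exactly where the padding $\varphi\mapsto\varphi\land\varphi$ is used. The point is that changing the underlying world of $M$ requires a modal move in $\mathcal{G}^\mathrm{P}$, which the $\mathcal{P}^\mathrm{K}$-construction turns into a whole inserted copy of $F_0$ followed by one of $F_1$ — at least two edges — so a world at game-distance $e$ has its $M$-component within $M$-distance $\lfloor e/2\rfloor$; moreover the extra conjunct inserts one further such "copy of $F_0$ then copy of $F_1$" block directly after the root, and this block, together with the root's own labelling, is completely free of $\varphi$ and of $V,V'$. Consequently a world of $\mathcal{G}^\mathrm{K}(M,w\models\varphi\land\varphi)$ at game-distance $\le n+1$ from the root is either a world of that $\varphi$-and-$V$-free initial block, or an inserted gadget world sitting in a gadget between positions whose $M$-worlds lie within distance $n$ of $w$, or a world $\underline{\tuple{\psi,v}}$ with $v$ within distance $n$ of $w$ — hence in the domain of $I$ — and symmetrically on the primed side, with matching distances by the structural facts above. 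Feeding this into the inductive step, the $(n+1)$-balls are $J$-isomorphic, i.e.\ $f_{\varphi\land\varphi}(M,w)$ and $f_{\varphi\land\varphi}(M',w')$ are $(n+1)$-isomorphic; and one sees along the way that $f_\varphi$ would only yield $n$-isomorphism, so the single extra conjunction is precisely what buys the "$n\mapsto n+1$''.
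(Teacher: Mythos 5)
Your proposal is correct and takes essentially the same route as the paper's own (two-sentence) proof: the paper argues that $n$-isomorphism of $(M,w)$ and $(M',w')$ forces the evaluation games to agree until $n$ modal moves have been made, and that the initial conjunction move of $\varphi\land\varphi$ buys the extra step from $n$ to $n+1$. What you add — extending $J$ explicitly to the gadget worlds, the induction on the radius of the ball, and the accounting that each game move costs at least two edges in $\mathcal{G}^\mathrm{K}$ — is a careful elaboration of exactly that argument rather than a different one.
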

\begin{proof}
    As $(M,w)$ and $(N,v)$ are $n$-isomorphic, the evaluation games $\mathcal{G}(M,w\models\varphi\land\varphi)$ and $\mathcal{G}(N,v\models\varphi\land\varphi)$ are going to be same up to $n$-many plays of the form $\tuple{\triangle\psi,w}$, with $\triangle\in\{\Box_0,\Diamond_0,\Box_1,\Diamond_1\}$.
    As the first move in an evaluation game for the formula $\varphi\land\varphi$ is to choose between a conjunction, we can guarantee that the two games above are the same up to $n+1$ moves.
\end{proof}

\begin{lemma}
    \label{lem::fixpoint-exists}
    For all $\mu$-formula $\varphi$, the function $f_{\varphi\land\varphi}$ has a fixed-point (up to isomorphism).
    That is, there is a model $(M,w)$ such that $f_\varphi(M,w)$ is isomorphic to $(M,w)$.
\end{lemma}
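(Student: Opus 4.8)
The plan is to treat $f := f_{\varphi\land\varphi}$ as a contraction on the space of isomorphism types of pointed Kripke models: Lemma~\ref{lem::n-iso-then-f-is-n+1-iso} says that if two models agree up to radius $n$, then their $f$-images agree up to radius $n+1$, so $f$ strictly increases the radius of agreement. A fixed point is then found as the limit of the iterates of $f$ started from an arbitrary pointed model. (Below I write $\cong_n$ for ``$n$-isomorphic''.)

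First I would fix a starting model $(M_0,w_0)$ --- say, the one-world model --- and set $(M_{k+1},w_{k+1}) := f(M_k,w_k)$. By induction on $k$ I would show that $(M_k,w_k) \cong_k (M_{k+1},w_{k+1})$. The base case holds because any two pointed models are $0$-isomorphic, since $M\restrictedto 0$ is the empty model, which the preliminaries declare isomorphic to itself; the inductive step is exactly Lemma~\ref{lem::n-iso-then-f-is-n+1-iso}. Using the routine facts that $n$-isomorphism is an equivalence relation and is implied by $(n+1)$-isomorphism, this gives $(M_k,w_k) \cong_k (M_m,w_m)$ for all $m \ge k$. In particular, the isomorphism type of the radius-$k$ ball $(M_m\restrictedto k, w_m)$ does not depend on $m \ge k$, and the radius-$k$ restriction of the radius-$(k+1)$ ball reproduces the radius-$k$ ball up to isomorphism.

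Next I would assemble the limit from these coherent balls. Recursively pick genuine representatives: let $B_0$ be the empty model, and given $B_k \cong (M_k\restrictedto k, w_k)$, transport structure along an isomorphism to obtain $B_{k+1} \cong (M_{k+1}\restrictedto (k+1), w_{k+1})$ with $B_{k+1}\restrictedto k = B_k$ on the nose. Put $(M_\infty, w_\infty) := \bigcup_k B_k$. A short check --- each $B_k$ is an induced submodel, and every path of length $<k$ from $w_\infty$ already lies in some $B_n$ --- shows $M_\infty\restrictedto k = B_k$, hence $(M_\infty,w_\infty)\cong_k (M_k,w_k)$ for every $k$. Applying Lemma~\ref{lem::n-iso-then-f-is-n+1-iso} once more, $f(M_\infty,w_\infty) \cong_{k+1} f(M_k,w_k) = (M_{k+1},w_{k+1}) \cong_{k+1} (M_\infty,w_\infty)$, so $f(M_\infty,w_\infty)$ and $(M_\infty,w_\infty)$ are $n$-isomorphic for all $n$.

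The hard part will be the final inference: from ``$n$-isomorphic for all $n$'' one cannot in general conclude ``isomorphic''. I would resolve this by restricting the whole construction to rooted, image-finite pointed models (each world has finitely many successors, and one passes to the point-generated part when needed). The starting model can be taken rooted and image-finite; the operation $\mathcal{P}\mapsto\mathcal{P}^{\mathrm{K}}$ --- and hence $f$ --- preserves both properties as long as the fixed witnessing frames $F_0,F_1$ are finite, which is the case for the logics of the Corollary (e.g.\ the three-element $\mathsf{S5}$-cluster and a two-world frame); and a directed union of rooted image-finite models is again rooted and image-finite, with all its balls $B_k$ finite. For rooted image-finite models, a standard back-and-forth / K\"onig's-lemma argument promotes agreement on all finite radii to a genuine isomorphism, which finishes the proof. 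If one does not wish to assume $F_0,F_1$ finite, the same limit can instead be taken inside the metric completion of the space of such models, or $M_\infty$ can be described explicitly as the model whose worlds are the ``legal'' infinite $\mathrm{Sub}(\varphi\land\varphi)$-labelled branches threaded through the iterated construction; I expect the image-finite route to be the cleanest.
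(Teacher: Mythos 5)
Your construction of the iterates and of the coherent balls $B_k$ is exactly the paper's, but you part ways at the final step, and that is where a genuine gap opens. You end up with only the abstract statement that $f(M_\infty,w_\infty)$ and $(M_\infty,w_\infty)$ are $n$-isomorphic for every $n$, correctly observe that this does not in general yield an isomorphism, and then repair the argument by restricting to rooted image-finite models and invoking a back-and-forth/K\"onig's-lemma promotion. That repair needs the witnessing frames $F_0$ and $F_1$ to be finite, which the lemma does not assume: the subframe conditions only require that \emph{some} frame of $\mathsf{F}_i$ contain the relevant two or three related worlds, and that frame may well be infinite (the class $\mathsf{F}_i$ need not contain any finite frame), in which case $\mathcal{P}^{\mathrm{K}}$ is not image-finite and your promotion step fails. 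Your two fallback suggestions (metric completion, explicit description of the limit via labelled branches) are gestured at but not carried out, so in the generality in which the lemma is stated the argument is incomplete.

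The paper sidesteps the issue entirely by never discarding the witnessing isomorphisms in favour of the relation ``$n$-isomorphic''. Lemma~\ref{lem::n-iso-then-f-is-n+1-iso} produces $J$ from $I$ by the explicit formula $J(\tuple{\psi,w})=\tuple{\psi,I(w)}$, so the successive isomorphisms $J_n$ cohere along the identifications $B_{k+1}\restrictedto k = B_k$ that you yourself arranged ``on the nose''; their union is then a single bijection from $(M_\infty,w_\infty)$ onto $f(M_\infty,w_\infty)$ (the latter being the union of the models $M_{n+1}\restrictedto (n+1)$), and it preserves the point, the edges, and the valuation because each $J_n$ does on its ball. So the missing idea is not image-finiteness but \emph{coherence of the isomorphisms}: you already have every ingredient for it, since your transported representatives make each restriction map an identity on overlaps. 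With that observation your proof closes in full generality and becomes essentially the paper's.
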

\begin{proof}
    Let $(M_0,w_0)$ be a fixed arbitrary pointed Kripke model.
    We define $(M_{n+1},w_{n+1}) = f_{\varphi\land\varphi}(M_n,w_n)$ inductively on $n\in\mathbb{N}$.
    If $n=0$, then $(M_0,w_0)$ and $(M_1,w_1)$ are trivially $0$-isomorphic.
    By induction on $n$, $(M_n,w_n)$ and $(M_{n+1},w_{n+1})$ are $n$-isomorphic via Lemma \ref{lem::n-iso-then-f-is-n+1-iso}.
    Therefore, if $m> n$ then $(M_n,w_n)$ is $n$-isomorphic to $(M_m,w_m)$.

    We can now define a pointed Kripke model $(M,w)$ which is $n$-isomorphic to $(M_n,w_n)$ for all $n$.
    We identify $(M_n\upharpoonright n,w_n)$ and $(M_{n+1}\upharpoonright n,w_{n+1})$, since they are $n$-isomorphic by the restriction of the isomorphism $J_n$ from Lemma \ref{lem::n-iso-then-f-is-n+1-iso}.
    Furthermore, the isomorphisms $J_n$ and $J_{n+1}$ coincide on $(M_n\upharpoonright n,w_n)$ by construction.
    Let $M$ be the Kripke model whose graph is the union of the graphs of the models $M_{n}\upharpoonright n$ and whose valuation is the union of the valuation of the $M_{n}\upharpoonright n$; also set $w=w_0$.
    Then $f_{\varphi\land\varphi}(M,w)$ is the Kripke model whose graph is the union of the graphs of the models $M_{n+1}\upharpoonright n+1$ and whose valuation the union of the valuation of the $M_{n+1}\upharpoonright n+1$.
    The union of the $J_n$ is an isomorphism between $(M,w)$ and $f_{\varphi\land\varphi}(M,w)$.
\end{proof}

\begin{proof}[Proof of Item 1 of the Main Theorem]
    Let $\mathsf{F}_0$ and $\mathsf{F}_1$ be classes of unimodal Kripke frames closed under isomorphic copies and disjoint unions.
    Suppose $\circ\leftarrow\circ\to\circ$ is a subframe of $\mathsf{F}_0$ and $\circ\to\circ$ a subframe of $\mathsf{F}_1$.

    If $n$ is even, then $W_n \in {\Pi}^{\mu}_{n+1}$.
    For a contradiction, suppose that $W_n$ is equivalent to some formula in ${\Pi}^{\mu}_{n}$ over $\mathsf{F}_0\otimes\mathsf{F}_1$.
    Let $\varphi\in{\Sigma}^{\mu}_{n}$ be equivalent to $\neg W_n$.
    Let $(M,w)$ be a fixed-point of $f_{\varphi\land\varphi}$.
    Then
        \begin{align*}
            M,w\models\neg W_n
            & \iff M,w\models \varphi\land\varphi \\
            & \iff f_{\varphi\land\varphi}(M,w)\models W_n \\
            & \iff M,w\models W_n.
        \end{align*}
    The second equivalence follows from Proposition \ref{prop::win-reg-fmls-are-correct} and the third one follows from Lemma \ref{lem::fixpoint-exists}.
    This is a contradiction, and so $W_n$ is not equivalent to any formula in ${\Pi}^{\mu}_{n}$ over $\mathsf{F}_0\otimes\mathsf{F}_1$.
    The case for $n$ odd is symmetric: $W_n \in {\Sigma}^{\mu}_{n+1}$ and is not equivalent to any formula in ${\Sigma}^{\mu}_{n}$.
\end{proof}

\section{Finishing the proof of the Main Theorem}
\label{sec::other-items}
To prove Items 2 and 3 of the Main Theorem, we modify two points in the proof above:
first, we define new functions transforming parity games into Kripke models; second, we supply new versions of the modalities $\blacklozenge$ and $\blacksquare$.

We first consider the case of Item 2.
Let $\mathsf{F}_0$ and $\mathsf{F}_1$ be classes of unimodal Kripke frames closed under isomorphic copies and disjoint unions.
Suppose $\circ\to\circ\to\circ$ is a subframe of $\mathsf{F}_0$ and $\circ\to\circ$ is a subframe of $\mathsf{F}_1$.
When we define a Kripke model $\mathcal{P}^\mathrm{K}$ from a parity game $\mathcal{P}$, we change the way we use the copies of $\circ\to\circ\to\circ$ and $\circ\to\circ$.
Suppose $v\in V_\exists, V_\forall$ and $vE = \{v_0, \dots v_n\}$.
The players choose the next position as follows: they first move once in a copy of $\circ\to\circ\to\circ$; they then confirm some $v_i$ using a copy of $\circ\to\circ$ or move along the current copy $\circ\to\circ\to\circ$; if they moved along $\circ\to\circ\to\circ$, they must confirm this move via a copy of $\circ\to\circ$.

To control the movement of the players over copies of $\circ\to\circ\to\circ$, we use three propositional symbols $\prezero$, $\midzero$, and $\nextzero$.
Here, $\prezero$ holds at the first world of the copies of $\circ\to\circ\to\circ$, $\midzero$ holds at the second world, and $\nextzero$ holds at the third world.
We define $\blacklozenge$ and $\blacksquare$ as follows:
\begin{itemize}
    \item $\blacklozenge \varphi := \mu Y.\prezero \land \bd \land \Diamond_0[\midzero \land \preone \land  \land \bd \land \Diamond_1(\nextone \land \bd \land ((Y\land \neg\st) \lor (\varphi \land \st))) \lor \Diamond_0(\nextzero \land \preone \land \bd \land \Diamond_1(\nextone \land \bd \land ((Y\land \neg\st) \lor (\varphi \land \st))))]$; and
    \item $\blacksquare \varphi := \mu Y.\prezero \land \bd \to \Box_0[\midzero \land \preone \land  \land \bd \to \Box_1(\nextone \land \bd \land ((Y\land \neg\st) \lor (\varphi \land \st))) \land \Box_0(\nextzero \land \preone \land \bd \to \Box_1(\nextone \land \bd \land ((Y\land \neg\st) \lor (\varphi \land \st))))]$,
\end{itemize}
where $Y$ is a fresh variable symbol.
The definition of the winning region formulas $W_n$ are the same as above, where $\blacklozenge$ and $\blacksquare$ use their new definitions.

Now for the proof of Item 3.
Let $\mathsf{F}_0$, $\mathsf{F}_1$, and $\mathsf{F}_2$ be classes of unimodal Kripke frames closed under isomorphic copies and disjoint unions.
Suppose $\circ\to\circ$ is a subframe of $\mathsf{F}_0$, $\mathsf{F}_1$, and $\mathsf{F}_2$.
Given $v\in V_\exists, V_\forall$ and $vE = \{v_0, \dots v_n\}$, we build a Kripke model as in the proof of Item 2, but instead of using a copy of $\circ\to\circ\to\circ$, we use two copies of $\circ\to\circ$, one from $\mathsf{F}_0$ and one from $\mathsf{F}_1$; we use copies of $\circ\to\circ$ from $\mathsf{F}_2$ to confirm the choices.
This time we will also use fresh proposition variables $\pretwo$ and $\nexttwo$ to control the movement of the players along copies of $\circ\to\circ$ in $\mathsf{F}_2$.
Here, we define $\blacklozenge$ and $\blacksquare$ as follows:
\begin{itemize}
    \item $\blacklozenge \varphi := \mu Y.\prezero \land \bd \land \Diamond_0[\nextzero \land \preone \land \pretwo \land \bd \land \Diamond_2(\nexttwo \land \bd \land ((Y\land \neg\st) \lor (\varphi \land \st))) \lor \Diamond_1(\nextone \land \pretwo \land \bd \land \Diamond_2(\nexttwo \land \bd \land ((Y\land \neg\st) \lor (\varphi \land \st))))]$; and
    \item $\blacksquare \varphi := \mu Y.\prezero \land \bd \to \Box_0[\nextzero \land \preone \land \pretwo \bd \to \Box_2(\nexttwo \land \bd \land ((Y\land \neg\st) \lor (\varphi \land \st))) \land \Box_1(\nextone \land \pretwo \land \bd \to \Box_2(\nexttwo \land \bd \land ((Y\land \neg\st) \lor (\varphi \land \st))))]$,
\end{itemize}
where $Y$ is a fresh variable symbol.
The definition of the winning region formulas $W_n$ are the same as above, where $\blacklozenge$ and $\blacksquare$ use their new definitions.

\section{Case studies on the collapse over multimodal logics}
\label{sec::case-studies}
We now comment on two logics where the $\mu$-calculus collapses to modal logic.
These are not originally framed in the context of multimodal $\mu$-calculus.

\paragraph{Provability Logic} $\mathsf{GLP}$ is a multimodal provability logic with signature $\mathbb{N}$, first defined by Japaridze.
One of the possible arithmetical interpretations for each $\Box_n$ is as a provability predicate for $\mathsf{I}\Sigma_n$.
Each modality $\Box_n$ satisfies the necessitation rule and the axioms for the provability $\mathsf{GL}$: $\Box (P\to Q)\to (\Box P\to \Box Q)$ and $\Box(\Box P\to P)\to \Box P$.
While $\mathsf{GLP}$ contains the fusion of infinitely many copies of $\mathsf{GL}$, it is not a fusion logic: it also includes the axioms $\Box_m P\to \Box_n\Box_m P$, $\Diamond_m P\to \Box_n\Diamond_m P$, and $\Box_m P \to \Box_n P$, for all $m\leq n$.

Ignatiev \cite{ignatiev1993strong} proved that $\mathsf{GLP}$ has the fixed-point property: if $X$ is in the scope of some $\Box_i$ in $\varphi(X)$, then there is $\psi$ such that $\mathsf{GLP} \vdash \psi\leftrightarrow\varphi(\psi)$.
This implies that we do not get a more expressive logic if we add to it the operators $\mu$ and $\nu$.
While the additional conditions on the relation between the modalities makes it possible to have the fixed-point property, $\mathsf{GLP}$ is not complete over any class of Kripke models.

\paragraph{Intuitionistic Modal Logic}
$\mathsf{IS5}$ is an intuitionistic variation of $\mathsf{S5}$; it is also known as $\mathsf{MIPQ}$.
It consists of closure under necessitation and \emph{modus ponens} of the set of formulas containing the intuitionistic tautologies along with the axioms $T:=\Box\varphi\to \varphi \land \varphi\to \Diamond\varphi$, $4 := \Box\varphi\to \Box\Box\varphi \land \Diamond\Diamond\varphi\to \Diamond\varphi$, and $5 := \Diamond\varphi\to \Box\Diamond\varphi \land \Diamond\Box\varphi\to \Box\varphi$.
An $\mathsf{IS5}$ model is a tuple $\tuple{W,\preceq,R,V}$ satisfying: $\preceq$ is a pre-order; $R$ is an equivalence relation; $w R;\preceq v$ implies $w \preceq;R v$; and $w\preceq v$ and $w\in V(P)$ implies $v\in V(P)$.
$\mathsf{IS5}$ can be thought as a bimodal logic, where $\Box$ and $\Diamond$ are abbreviations for $\Box_\preceq\Box_R$ and $\Box_\preceq\Diamond_R$, respectively.
Ono \cite{ono1977intuitionistic} and Fischer Servi \cite{fischerservi1978finite} proved that $\mathsf{IS5}$ is complete over $\mathsf{IS5}$ frames.

Pacheco \cite{pacheco2023game} proved that the $\mu$-calculus collapses to constructive modal logic over $\mathsf{IS5}$ frames using game semantics for the constructive $\mu$-calculus.
This example shows that, if we add restrictions on how we use multiple modalities, then we may still have the collapse to modal logic.
Note that the relation $\tuple{W,\preceq}$ is an $\mathsf{S4}$ frame, and the $\mu$-calculus does not collapse to modal logic over $\mathsf{S4}$ frames \cite{alberucci2009modal}.
So the restriction on the usage of the modalities here is quite strong.

Motivated by the examples above, we close the paper with a problem:
\begin{problem}
    When does the $\mu$-calculus collapse to modal logic over multimodal frames?
\end{problem}


\bibliographystyle{eptcs}
\bibliography{multimodal-strictness}
\end{document}